\date{}
\theoremstyle{definition}
\newtheorem{definition}{Definition}
\theoremstyle{plain}
\newtheorem{theorem}{Theorem}
\newtheorem{lemma}{Lemma}
\newtheorem{proposition}{Proposition}
\newtheorem{corollary}{Corollary}
\newtheoremstyle{bfnote}
{}{}
{}{}
{\bfseries}{.}
{ }
{\thmname{#1}\thmnumber{ #2}\thmnote{\textnormal{ (#3)}}}
\theoremstyle{bfnote}
\newcommand{\Omit}[1]{}
\def\nd{3.0em} 	
\def\ra{0.8em}
\newlength{\wordlength}
\newcommand{\mathwordbox}[3][c]{\settowidth{\wordlength}{$#3$}\makebox[\wordlength][#1]{$#2$}}
\newcommand{\R}{\mathrel{R}}
\newcommand{\RP}{\mathrel{P}}
\newcommand{\I}{\mathrel{I}}
\newcommand{\RAC}{\mathrel{R^A_C}}
\newcommand{\PAC}{\mathrel{P^A_C}}
\newcommand{\IAC}{\mathrel{I^A_C}}
\newcommand{\RA}{\mathrel{R^A}}
\newcommand{\PA}{\mathrel{P^A}}
\newcommand{\RBC}{\mathrel{R^B_C}}
\newcommand{\PBC}{\mathrel{P^B_C}}
\newcommand{\RB}{\mathrel{R^B}}
\newcommand{\PB}{\mathrel{P^B}}
\newcommand{\base}{\mathrel{\overline{P}}}
\newcommand{\baseC}{\mathrel{\overline{P}_C}}
\newcommand{\baseRC}{\mathrel{\overline{R}_C}}
\newcommand{\baseR}{\mathrel{\overline{R}}}
\newcommand{\baseI}{\mathrel{\overline{I}}}
\newcommand{\splitcycle}{\mathit{SC}}
\renewcommand{\epsilon}{\varepsilon}
\title{On Locally Rationalizable\\ Social Choice Functions}
\author{Felix Brandt \quad Chris Dong\\Technische Universit\"at M\"unchen
}
\begin{document}

\maketitle

\begin{abstract}
We consider a notion of rationalizability, where the rationalizing relation may depend on the set of feasible alternatives. More precisely, we say that a choice function is locally rationalizable if it is rationalized by a \emph{family} of rationalizing relations such that a strict preference between two alternatives in some feasible set is preserved when removing other alternatives.
\citet{Tyso08a} has shown that a choice function is locally rationalizable if and only if it satisfies Sen's~$\gamma$. We expand the theory of local rationalizability by proposing a natural strengthening of $\gamma$ that precisely characterizes local rationalizability via PIP-transitive relations and by introducing the $\gamma$-hull of a choice function as its finest coarsening that satisfies $\gamma$.
Local rationalizability permits a unified perspective on social choice functions that satisfy $\gamma$, including classic ones such as the \emph{top cycle} and the \emph{uncovered set} as well as new ones such as \emph{two-stage majoritarian choice} and \emph{split cycle}. 
We give simple axiomatic characterizations of some of these using local rationalizability and propose systematic procedures to define social choice functions that satisfy~$\gamma$.
\end{abstract}

\noindent\textbf{Keywords}: Choice theory, Rationalizability, Consistency, Social Choice

\section{Introduction}
\label{sec:intro}

Rational choice theory studies functions that associate each feasible set of alternatives with a nonempty subset of chosen alternatives. Typically, the choices of rational agents are supposed to be rationalizable in the sense that there exists an underlying binary preference relation on the set of all alternatives such that the choice from any feasible subset consists precisely of the elements that are maximal according to this relation. When the number of feasible alternatives is finite, a relation can rationalize a choice function if and only if its strict part is acyclic. Stronger notions of rationalizability can be obtained by furthermore requiring the rationalizing relation to be quasi-transitive or transitive.

As \citet{Arro51a} and others have famously demonstrated, these notions of rationalizable choice can hardly be sustained in the context of \emph{social} choice. There are several impossibility results showing that even the weakest form of rationalizability is incompatible with mild conditions that are considered indispensable for collective decision-making \citep[for an overview of the extensive literature, see][]{Kell78a,Sen77a,Sen86a,Schw86a,CaKe02a}.
\citet{Sen71a} has characterized rationalizable choice functions using two consistency conditions that relate choices from varying feasible sets to each other, namely conditions $\alpha$ and~$\gamma$ (aka contraction and expansion consistency). $\alpha$ demands that every chosen alternative remains chosen in every feasible subset in which it is contained, and~$\gamma$ requires that every alternative that is chosen from two feasible sets will also be chosen from the union of both sets.
While $\alpha$ has been identified as the main culprit of the impossibility results in social choice \citep{Sen77a,Sen86a},~$\gamma$ and even strengthenings of~$\gamma$ such as $\beta$ and $\beta^+$ appear to be much less harmful. As a matter of fact, two majoritarian social choice functions---the top cycle and the uncovered set---are known to satisfy $\beta^+$ and~$\gamma$, respectively.

The goal of this paper is to improve our understanding of choice functions that satisfy~$\gamma$ and its stronger siblings. For one, this opens an avenue for escaping from the notorious Arrovian impossibilities. Secondly,~$\gamma$ and related consistency conditions are natural and appealing in their own right.
Only few social choice functions are known to satisfy~$\gamma$. Among them are variants of the \emph{top cycle} and the \emph{uncovered set} \citep[see, e.g.,][]{Bord83a}, and two recently proposed functions called \emph{split cycle} \citep{HoPa20a} and \emph{two-stage majoritarian choice} \citep{HoSp21a}. Our results illuminate the similarities of these functions and enable the definition of new 
rules that satisfy~$\gamma$.

The key to get a grip on choice functions that satisfy~$\gamma$ is a weakened notion of rationalizability, where the rationalizing relation may vary depending on the feasible set. Clearly, without further restrictions, any choice function $C$ would be rationalizable in this relaxed setting by letting the rationalizing relation $R^A$ for feasible set $A$ be the relation in which each element of $C(A)$ is strictly preferred to each element of $A\setminus C(A)$. We therefore impose the following natural restriction on families of rationalizing relations: for two feasible sets $A,B$ with $B\subseteq A$ and two alternatives $x,y\in B$ with $x \PA y$, we demand that $x \PB y$. In other words, a strict preference between two alternatives is preserved when reducing the feasible set. Since rationalizing relations are complete, this is equivalent to demanding that $x \RB y$ implies $x\RA y$. We then say that a choice function $C$ is locally rationalizable if there is a family of acyclic and complete relations $(\RA)_A$ that rationalize $C$ in this sense. 

\citet{Tyso08a} has established illuminating connections between local rationalizability and choice consistency conditions. 
\begin{itemize}
\item A choice function satisfies~$\gamma$ if and only if it is locally rationalizable (\Cref{thm:Gamma}).
\item A choice function satisfies~$\gamma$ and $\epsilon^+$ if and only if it is locally rationalized by a family of quasi-transitive relations (\Cref{thm:GammaAizerman}).
\item A choice function satisfies $\beta^+$ if and only if it is locally rationalized by a family of transitive relations (\Cref{thm:BetaPlus}).
\end{itemize}
As corollaries of these results, one obtains classic characterizations of rationalizability, quasi-transitive rationalizability, and transitive rationalizability by \citet{Sen71a}, \citet{Schw76a}, and \citet{Arro59a}.

We extend these results by giving a characterization of choice functions that are locally rationalizable via PIP-transitive relations, an intermediate transitivity notion proposed by \citet{Schw76a} that lies in between quasi-transitivity and full transitivity. Such functions are characterized by $\gamma^+$, a new natural strengthening of~$\gamma$, which requires that for all feasible sets $A,B$, $C(A)\subseteq C(A\cup B)$ or $C(B)\subseteq C(A\cup B)$. 

All of the above results make reference to families of local revealed preference relations that are sandwiched in between the base relation and the revealed preference relation of the choice function at hand. We introduce the \emph{$\gamma$-hull} of a given choice function $C$ as the unique finest coarsening of $C$ that satisfies $\gamma$. In \Cref{thm:gammahull}, we prove that the $\gamma$-hull of $C$ is locally rationalized by the family of local revealed preference relations of $C$. 

Local rationalizability allows us to give a simple characterization of the top cycle based on an observation by \citet{Bord76a}: 
the top cycle is the finest choice function satisfying transitive local rationalizability.
We also give a new characterization of Gillies' uncovered set as the finest choice function satisfying quasi-transitive local rationalizability and a condition we call weak idempotency ($C(C(A))= C(A)$ whenever $\lvert C(A) \rvert = 2$). By introducing further technical axioms, we obtain characterizations of other variants of the uncovered set due to Bordes, McKelvey, and Duggan. We also use local rationalizability to characterize the split cycle rule.

The notion of local rationalizability enables a unified perspective on social choice functions that satisfy~$\gamma$. The idea is to view weak majority rule as a directed graph on the set of alternatives, break majority cycles in the feasible set by removing edges of this graph, and then return the maximal elements of the feasible set according to the resulting acyclic graph. Different rules specifying which strict edges ought to be removed lead to different social choice functions. 
\begin{itemize}
	\item \emph{Top cycle}: Remove all edges that lie on cycles.
	\item \emph{Untrapped set}: Remove all edges that lie on strict cycles.
	\item \emph{Uncovered set}: Remove edges that lie on three-cycles.\footnote{Which edges are to be deleted from each three-cycle depends on which variant of the uncovered set is considered. For example, for Duggan's deep uncovered set, all strict edges are deleted, whereas for McKelvey's uncovered set, only strict edges that lie on three-cycles in which at least two edges are strict are deleted. In the absence of majority ties, all variants coincide.}
	\item \emph{Split cycle}: Remove all edges with minimal majority margins from each cycle.
	\item \emph{Two-stage majoritarian choice}: Fix a strict order over all alternatives. In each cycle, remove the edge departing from the maximal alternative w.r.t. the order.
\end{itemize}
Moreover, for any given choice function $C$, we define a new choice function, the \emph{$\gamma$-core of $C$}, which satisfies $\gamma$. Here, in each cycle, all outgoing edges from alternatives that are selected by $C$ are removed from the majority graph.

Note that the rule specifying which edges should be deleted must only depend on the alternatives contained in the cycle. Otherwise, the resulting social choice function is not guaranteed to satisfy~$\gamma$. This is, for example, the case for Kemeny's rule, ranked pairs, and Schulze's rule, which share the same idea of taking maximal elements after breaking majority cycles \citep[see, e.g.,][for definitions]{FHN15a}. However, the cycles are not broken locally, which results in functions that violate~$\gamma$.

The remainder of the paper is structured as follows. The choice theory model, as well as standard notions of rationalizability and consistency, are introduced in \Cref{sec:Preliminaries}. This is then generalized to local rationalizability and families of local revealed preference relations in \Cref{sec:localrat}.
\Cref{sec:localratChars} shows that each of three notions of local rationalizability is equivalent to an expansion consistency condition, with local revealed preference relations playing a key role.
In \Cref{sec:results}, we present two new additions to the local rationalizability framework by proposing and characterizing a natural strengthening of~$\gamma$, as well as introducing the concept of the~$\gamma$-hull, defined as the finest coarsening of a choice function that satisfies~$\gamma$.
Finally, we discuss various social choice functions that can be described using local rationalizability in \Cref{sec:socialchoice} and introduce the~$\gamma$-core, a systematic procedure to generate social choice functions satisfying~$\gamma$.

\section{Preliminaries}
\label{sec:Preliminaries}

Let $U$ be a nonempty universe of alternatives.  
In this paper, every nonempty and finite subset of $U$ is a \emph{feasible set}. The set of all feasible sets is denoted by $\mathcal{U}$.
A \emph{choice function} $C$ maps each feasible set $A\in \mathcal U$ to a nonempty subset of $A$.
The set of all choice functions is denoted by $\mathcal{C}$.
For two choice functions $C, \widehat{C}$, we say that $C$ is a \emph{refinement} of $\widehat{C}$ if $C(A)\subseteq \widehat{C}(A)$ for all $A\in\mathcal{U}$. Analogously,  $\widehat{C}$ is a \emph{coarsening} of $C$. For a collection of choice functions $\widehat{\mathcal{C}} \subseteq \mathcal{C}$, we say that $C\in \widehat{\mathcal{C}}$ is the \emph{finest} choice function in $\widehat{\mathcal{C}}$ if $C$ is a refinement of every $\widehat{C}\in\widehat{\mathcal{C}}$.

If $R$ is a relation, we denote its strict part by $P$ and its symmetric part by $I$.
$R$ is \emph{transitive} (\emph{quasi-transitive}, \emph{acyclic}, resp.) if for all $x,y,z,x_1,\dots,x_k\in U$, 
\begin{align}
	x\R y \text{ and }y\R z &\text{ implies }x\R z\text, \tag{transitivity}\\
	x\RP y \text{ and }y\RP z &\text{ implies }x\RP z\text,\tag{quasi-transitivity}\\
	x_1\RP x_2,\dots,x_{k-1} \RP x_k &\text{ implies }x_1 \R x_k\text.\tag{acyclicity}
\end{align}
Transitivity implies quasi-transitivity, which implies acyclicity.

Let $R\subseteq U\times U$ be a relation on $U$ and $A$ a feasible set. 
The set of \emph{maximal elements} in $A$ with respect to $R$ is defined by
\[\max_R A=\{ x\in A \colon y\RP x \text{ for no } y\in A\}\text.\]
Note that $\max_R A$ is nonempty for all $A\in \mathcal{U}$ if and only if $R$ is acyclic.

A choice function $C$ is \emph{rationalizable} (\emph{quasi-transitively rationalizable},  \emph{transitively rationalizable}, resp.) if there is an acyclic (quasi-transitive, transitive, resp.) and complete relation $R$ on $U$ such that for all feasible sets $A$,
\[
	C(A)= \max_{R}{A}\text.\tag{rationalizability}
\]
In this case we say that $C$ is rationalized by $R$.
Two natural candidates for the rationalizing relation are the \emph{base relation} $\overline R_C$~\citep{Herz73a} and the \emph{revealed preference relation} $R_C$~\citep{Samu38a,Hout50a}, which, for all alternatives~$x$ and~$y$, are given by
\begin{align}
	x\mathrel{\overline{R}_C}y &\quad\text{iff}\quad x\in C(\{x,y\})\text,\tag{base relation}\\
	x\mathrel{R_C}y &\quad\text{iff}\quad x\in C(A)\text{ for some $A\in\mathcal{U}$ with $y\in A$.}\tag{revealed preference relation}
\end{align}
The revealed preference relation relates~$x$ to~$y$ if $x$ is chosen in the presence of~$y$ and possibly other alternatives, whereas the base relation only relates~$x$ to~$y$ if $x$ is chosen in the exclusive presence of~$y$. Both the base relation and the revealed preference relation are complete by definition. The revealed preference relation is furthermore guaranteed to be acyclic. 
Whenever $C$ is rationalizable, the base relation and the revealed preference relation coincide and rationalize $C$.

We now define four choice consistency conditions. For all feasible sets~$A$ and~$B$,
\begin{align*}
(\alpha): \qquad &\text{if }B\subseteq A \text{, then } C(A)\cap B\subseteq C(B)\qquad &\text{\citep{Cher54a}}\\
(\gamma): \qquad &C(A)\cap C(B) \subseteq C(A\cup B)\qquad &\text{\citep{Sen71a}}\\
(\epsilon^+): \qquad &\text{if } C(A)\subseteq B\subseteq A \text{, then } C(B)\subseteq C(A)\qquad & \text{\citep{Bord83a}}\\
(\beta^+): \qquad &\text{if }B\subseteq A \text{ and } C(A)\cap B \neq \emptyset \text{, then } C(B)\subseteq C(A)\qquad &\text{\citep{Bord76a}}
\end{align*}

\emph{Contraction consistency} conditions specify under which circumstances an alternative chosen from some feasible set is still chosen from a feasible \emph{subset}. Similarly, \emph{expansion consistency} conditions specify under which circumstances a chosen alternative is chosen from a feasible \emph{superset}. While $\alpha$ is a contraction consistency condition,~$\gamma$, $\epsilon^+$ and $\beta^+$ are expansion consistency conditions.\footnote{$\beta^+$ was first considered by \citet{Bord76a}. It is stronger than Sen's $\beta$, but equivalent to $\beta$ in the presence of $\alpha$.
\citet{Bord83a} introduced $\epsilon^+$ as a strengthening of $\epsilon$ by \citet{BBKS76a}. It is also known as Aïzerman \citep{Moul86a}, 
the weak superset property \citep{Bran11b}, 
and $\hat{\alpha}_\subseteq$ \citep{BBH16a}.
$\epsilon^+$ can be derived as the ``expansion part'' of Postulate $5^*$ introduced by \citet{Cher54a}.
}
$\beta^+$ is typically seen as the strongest (non-trivial) expansion consistency condition. It implies both~$\gamma$ and $\epsilon^+$, which are logically independent of each other. Choice theory has identified a number of fundamental relationships between consistency and rationalizability. Let $C$ be an arbitrary choice function.

\begin{align*}
\text{$C$ satisfies $\alpha$ and~$\gamma$} & \quad\text{iff}\quad\text{$C$ is rationalizable.} &\text{\citep{Sen71a}}\\
\text{$C$ satisfies $\alpha$,~$\gamma$, and $\epsilon^+$} & \quad\text{iff}\quad\text{$C$ is quasi-transitively rationalizable.} &  \text{\citep{Schw76a}}\\
\text{$C$ satisfies $\alpha$ and $\beta^+$} & \quad\text{iff}\quad\text{$C$ is transitively rationalizable.} & \text{\citep{Bord76a}}
\end{align*}

\section{Local Rationalizability and Local Revealed Preference}
\label{sec:localrat}

In this section, we define a weakening of rationalizability, which allows the rationalizing relation to vary depending on the feasible set. Formally, there is a family of rationalizing relations $(R^A)_{A\in\mathcal{U}}$, one for each feasible set, such that $R^B \subseteq R^A$ whenever $B\subseteq A$. We also define, for every choice function $C$, a corresponding family of revealed preference relations $(R^A_C)_{A\in\mathcal{U}}$ by restricting the witness for a preference in $R^A_C$ to subsets of $A$.

\citeauthor{Tyso08a} refers to local rationalizability as \emph{admitting a satisficing representation} while
\citeauthor{Dugg19a} uses the term \emph{pseudo-rationalizability}. However, the latter is already used by \citet{Moul80a} for a different notion in choice theory. We think the term `local' aptly describes that the rationalizing relation may depend on the feasible set.

\begin{definition}[Local rationalizability]
	\label{Def:UpRat}
	A choice function $C$ is \emph{locally rationalizable} if there is a family of relations $(R^A)_{A\in\mathcal{U}}$ such that for all feasible sets $A$,
	\begin{enumerate}[label=\textit{(\roman*)}]
		\item  \label{it:acyclicComplete}
				$R^A\subseteq A\times A$ is acyclic and complete,
		\item  \label{it:maximalChoice}
				$C(A)= \max_{R^A}{A}$, and
		\item \label{it:inclusion}
				$R^B \subseteq R^A$ for all feasible sets $B\subseteq A$.
	\end{enumerate}
	In this case, we say that $C$ is \emph{locally rationalized by} $(R^A)_A$.
\end{definition}

Conditions \ref{it:acyclicComplete} and \ref{it:maximalChoice} are analogous to the case of standard rationalizability. 
Condition \ref{it:inclusion} implies that a strict preference between two alternatives in some feasible set is preserved when removing other alternatives. Since all relations $R^A$ are complete and hence $x \RA y$ if and only if not $y \PA x$, Condition \ref{it:inclusion} can be reformulated as
\begin{itemize}\label{cond:3prime}
	\item [\textit{(iii')}] $P^A\cap {(B\times B)}\subseteq P^B$ for all feasible sets $B\subseteq A$.
\end{itemize}
The crucial difference to classic rationalizability is that a strict preference between two alternatives can be revoked by introducing new alternatives.
Without Condition \ref{it:inclusion}, all choice functions would satisfy local rationalizability. Standard rationalizability via a global relation $R$ implies local rationalizability by letting $R^A = R|_A$ for each feasible set~$A$.\footnote{Remarkably, if we replace the subset in Condition \ref{it:inclusion} with a superset, then we obtain a dual definition that characterizes $\alpha$ for countable $U$. However, the results for local rationalizability can be structured much more elegantly than the corresponding theory for $\alpha$. For example, there is not always a unique inclusion minimal family of rationalizing relations, and quasi-transitivity does not add anything to acyclicity in this context.}

At first glance, it seems difficult to verify whether a choice function is locally rationalizable because there is a very large number of families of potentially rationalizing relations. It turns out that a rather natural candidate for such a family is obtained by extending the concept of revealed preference to families of relations that depend on the feasible set.
\begin{definition}[Local Revealed Preference Relations]
	\label{Def:LRP}
	Let $C$ be a choice function and $A$ a feasible set with $x,y\in A$. 
	We write $x \mathrel{R_C^A} y$ if and only if there is some feasible set $B\subseteq A$ with $x\in C(B)$ and $y\in B$. We call $R_C^A$ the \emph{local revealed preference relation} on $A$ and $(R_C^A)_{A\in \mathcal U}$ the \emph{family of local revealed preference relations}.
\end{definition}

The difference to classic revealed preference is that we now locally restrict our witness $B$ to be a subset of $A$, while the classic notion allows for arbitrary witnesses.

It follows from \Cref{Def:LRP} that a \emph{strict} local revealed preference between alternatives $x$ and $y$ in $A$ holds if and only if $y$ is not chosen from any subset that contains $x$, i.e.,
\begin{equation}\label{equ:strict_lrp}
x \PAC y \quad\text{iff}\quad y \notin C(B) \text{ for all } B\subseteq A \text{ with } x,y \in B\text.
\end{equation}
 
It is easily seen that the base relation of $C$ is equivalent to the local revealed preference relations of two-element sets, that is, for all $x,y\in U$, $x\baseRC y$ if and only if $x\mathrel{R^{\{x,y\}}_C} y$. 
Moreover, all local revealed preference relations are sandwiched in between the base relation and the revealed preference relation, i.e., for all $A\in \mathcal{U}$,
\begin{equation}\label{equ:lrp_inclusion}
\overline{R}_C\cap(A\times A) \quad\subseteq\quad R^A_C \quad\subseteq\quad R_C\text.
\end{equation}
The first inclusion holds because every two-element feasible set can serve as a witness, and the second follows from $R^U_C =R_C$.

The following lemma shows that the family of local revealed preference relations already satisfies all but one of the conditions required for local rationalizability. 

\begin{lemma}
	\label{lem:LRP}
	Let $C$ be a choice function. Then, $(R_C^A)_A$ satisfies Condition \ref{it:acyclicComplete} and \ref{it:inclusion} of \Cref{Def:UpRat}.\footnote{On top of that, the inclusion from left to right in Condition \ref{it:maximalChoice} ($C(A)\subseteq \max_{R^A}{A}$) is satisfied.}
\end{lemma}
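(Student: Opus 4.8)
The plan is to verify the three sub-claims in turn, exploiting the inclusions already recorded in~\eqref{equ:lrp_inclusion} and the characterization of strict local revealed preference in~\eqref{equ:strict_lrp}.

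First I would establish Condition~\ref{it:acyclicComplete}, that each $R_C^A$ is complete and acyclic on $A$. Completeness is immediate from \Cref{Def:LRP}: for any $x,y\in A$, the two-element set $\{x,y\}\subseteq A$ is a legitimate witness, and since $C(\{x,y\})$ is a nonempty subset of $\{x,y\}$, at least one of $x\RAC y$ or $y\RAC x$ holds. For acyclicity, I would use the sandwich $R_C^A\subseteq R_C$ from~\eqref{equ:lrp_inclusion}: the classic revealed preference relation $R_C$ is guaranteed acyclic (as noted in the preliminaries), and acyclicity is inherited by subrelations, so each $R_C^A$ is acyclic as well. (Alternatively one can argue directly from~\eqref{equ:strict_lrp}, but the inclusion argument is cleaner.)

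Next I would prove Condition~\ref{it:inclusion}, namely $R_C^B\subseteq R_C^A$ whenever $B\subseteq A$. This is essentially definitional monotonicity of the witness set: if $x\RBC y$ with $x,y\in B$, then there is a witness $D\subseteq B$ with $x\in C(D)$ and $y\in D$; since $B\subseteq A$ we have $D\subseteq A$, so the very same $D$ witnesses $x\RAC y$. I expect this step to be the most routine of the three.

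Finally I would address the footnote claim, $C(A)\subseteq\max_{R_C^A}A$, i.e.\ the left-to-right inclusion of Condition~\ref{it:maximalChoice}. Take $x\in C(A)$; I must show no $y\in A$ satisfies $y\PAC x$. Suppose toward a contradiction that $y\PAC x$ for some $y\in A$. By the strict-preference characterization~\eqref{equ:strict_lrp}, $y\PAC x$ means $x\notin C(B)$ for every $B\subseteq A$ containing both $x$ and $y$. But $A$ itself is such a set, which forces $x\notin C(A)$, contradicting $x\in C(A)$. Hence $x\in\max_{R_C^A}A$. The main obstacle, if any, is keeping the quantifiers in~\eqref{equ:strict_lrp} straight---one must read $y\PAC x$ correctly as ``$x$ is beaten by $y$,'' so that the forbidden witness is $A$ itself. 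The reverse inclusion $\max_{R_C^A}A\subseteq C(A)$ is precisely the one Condition that this family need not satisfy in general, which is why the lemma stops short of full local rationalizability.
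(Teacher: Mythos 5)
Your treatment of completeness, of Condition \textit{(iii)}, and of the footnote inclusion $C(A)\subseteq\max_{R_C^A}A$ is correct and follows the same lines as the paper. The gap is in your acyclicity argument. You deduce acyclicity of $R_C^A$ from the inclusion $R_C^A\subseteq R_C$ and the acyclicity of $R_C$, asserting that acyclicity is inherited by subrelations. That is false: passing to a subrelation can turn indifferences into strict preferences, so the strict part of $R_C^A$ need not be contained in the strict part of $R_C$, and a strict cycle can appear in the subrelation where none existed in the larger relation. Concretely, if $R$ relates every pair of distinct elements of $\{x,y,z\}$ in both directions (complete, acyclic, empty strict part) and $R'$ retains only $(x,y)$, $(y,z)$, $(z,x)$, then $R'\subseteq R$ is still complete, yet its strict part is the three-cycle $x\mathrel{P'}y\mathrel{P'}z\mathrel{P'}x$. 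The paper's own example shows this phenomenon actually occurs for local revealed preference: there $x\PBC y$ for $B=\{x,y\}$ even though $x$ and $y$ are tied under $R_C$. So the inclusion $R_C^A\subseteq R_C$ tells you nothing about cycles of $P^A_C$, and your parenthetical ``alternatively one can argue directly'' cannot be waved off --- the direct argument is the entire nontrivial content of this part of the lemma.

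What is actually needed is the following. Given $x_1\PAC x_2\PAC\dots\PAC x_k$ with all $x_i\in A$, set $B\coloneqq\{x_1,\dots,x_k\}\subseteq A$. Every $y\in B$ with $y\neq x_1$ has some $x\in B$ with $x\PAC y$, so by \eqref{equ:strict_lrp} (whose quantifier ranges over all subsets of $A$, of which $B$ is one) $y\notin C(B)$. Nonemptiness of $C(B)$ then forces $C(B)=\{x_1\}$, and $B$ itself becomes a witness for $x_1\RAC x_k$, which is exactly the conclusion acyclicity requires. Without this (or an equivalent) construction, your proof of Condition \textit{(i)} is incomplete.
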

\begin{proof}
Completeness of $R_C^A$ follows directly from \Cref{equ:lrp_inclusion} and the completeness of the base relation.

	To show acyclicity of $R_C^A$, 
	let $A$ be a feasible set and $x_1,\dots, x_k\in A$ such that $x_i \PAC x_{i+1}$ for all $0<i<k$. We know that $B\coloneqq\{x_i\colon1\leq i\leq k\}\subseteq A$. For each $y\neq x_1$, there is some element $x\in B$ with $x\PAC y$. Applying \Cref{equ:strict_lrp} to $B\subseteq A$, it follows that $y\notin C(B)$. By nonemptiness we conclude $\{x_1\}=C(B)$. Hence, $B$ is a witness for $x_1\RAC x_k$.
	
	To show Condition \ref{it:inclusion} of \Cref{Def:UpRat},
	let $A,B$ be feasible sets such that $B\subseteq A$ and $x,y\in B$ such that $x \RBC y$. 
	Now let $D\subseteq B$ be the witness containing $y$ with $x\in C(D)$. 
	Then of course $y\in D\subseteq A$ and still $x\in C(D)$. 
	By definition, this implies $x\RAC y$. Hence, $R^B_C\subseteq R^A_C$.
\end{proof}

The following example illustrates the difference between local rationalizability and rationalizability.
\paragraph{Example.}
	Let $U=\{x,y,z\}$ and $C$ be defined on all non-singleton subsets as follows.
	\begin{center}
		\begin{tabular}{ c c  }
			$A$ & $C(A)$ \\ 
			\hline
			$\{x,y,z\}$ & $\{x,y\}$\\
			$\{x,y\}$ & $\{x\}$\\
			$\{y,z\}$ & $\{y\}$\\
			$\{x,z\}$ & $\{x\}$\\
		\end{tabular}
	\end{center}
	For $B\coloneqq\{x,y\}$, we have $x \PBC y$. On the other hand, we have $y\mathrel{R_C} x$, since $y$ is chosen from $A\coloneqq\{x,y,z\}$.
	Hence, we have $y\in \max_{R_C} B$, but $y\notin \max_{R_C^B} B$.
	Despite $x \PBC y$, we still have $y \mathrel{I^A_C} x$ and thus $y\in \max_{R_C^A} A$.	
	In summary, $C$ is not rationalizable but does satisfy local rationalizability. 
\newline

It is easily seen that when assuming $\alpha$, all revealed preference relations are restrictions of the same relation to the given feasible set, and our generalized notion of revealed preference coincides with the classic one.

\begin{lemma}
	\label{lem:RevPrefandAlpha}
	Let $C$ be a choice function that satisfies $\alpha$. Then, $R^A_C=R_C\cap (A\times A)$ for all feasible sets $A$.
\end{lemma}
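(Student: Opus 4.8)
The plan is to prove the two inclusions separately, observing that only one of them requires the hypothesis~$\alpha$. First I would dispatch the inclusion $R^A_C \subseteq R_C\cap(A\times A)$, which holds for \emph{every} choice function. Indeed, by \Cref{Def:LRP} the relation $R^A_C$ lives on $A\times A$, and the second inclusion of \Cref{equ:lrp_inclusion} already gives $R^A_C\subseteq R_C$. Intersecting these two facts yields $R^A_C\subseteq R_C\cap(A\times A)$ with no appeal to $\alpha$. This settles the easy direction.

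For the reverse inclusion $R_C\cap(A\times A)\subseteq R^A_C$, which is where $\alpha$ enters, I would take arbitrary $x,y\in A$ with $x\mathrel{R_C}y$ and produce a \emph{local} witness inside $A$. By definition of the revealed preference relation there is some feasible set $D$ with $y\in D$ and $x\in C(D)$; note that $x\in D$ as well, since $C(D)\subseteq D$. The set $D$ need not be a subset of $A$, so it cannot serve directly as a witness for $x\mathrel{R_C^A}y$. The key step is to shrink $D$ down to the two-element set $\{x,y\}$: since $\{x,y\}\subseteq D$ and $x\in C(D)\cap\{x,y\}$, condition $\alpha$ gives $x\in C(\{x,y\})$. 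Because $x,y\in A$, we have $\{x,y\}\subseteq A$, so $\{x,y\}$ is a feasible subset of $A$ containing $y$ with $x\in C(\{x,y\})$, which is exactly the witness required for $x\mathrel{R_C^A}y$. Combining both inclusions establishes the claimed equality $R^A_C = R_C\cap(A\times A)$.

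There is no real obstacle here: the entire content is the contraction step, where $\alpha$ collapses any global witness to the canonical two-element witness $\{x,y\}$. Put differently, under $\alpha$ the relation $x\mathrel{R_C}y$ already forces $x\in C(\{x,y\})$, i.e.\ $x\baseRC y$, so that the three relations $\baseR_C\cap(A\times A)$, $R^A_C$, and $R_C\cap(A\times A)$ all coincide; this is the degenerate case of the sandwich in \Cref{equ:lrp_inclusion} and aligns with the well-known fact that the base and revealed preference relations agree whenever enough contraction consistency is present. I would keep the write-up short, stating the trivial inclusion in one sentence and spending the remaining effort on making the application of $\alpha$ to the pair $\{x,y\}\subseteq D$ explicit.
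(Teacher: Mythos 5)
Your proof is correct and follows essentially the same route as the paper's: the inclusion $R^A_C\subseteq R_C\cap(A\times A)$ holds by definition, and for the converse you take a global witness $D$ for $x\mathrel{R_C}y$ and use $\alpha$ to contract it to the two-element witness $\{x,y\}\subseteq A$. The additional remarks about the collapse of the sandwich in \Cref{equ:lrp_inclusion} are accurate but not needed.
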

\begin{proof}
	By definition we have $R^A_C\subseteq R_C\cap (A\times A)$.
	Now let $x \mathrel{R_C} y$ with $x,y\in A$. Then there is some witness $D\in \mathcal{U}$ with $x\in C(D)$, $y\in D$. By $\alpha$, we have $x\in C(\{x,y\})$. Since $\{x,y\}\subseteq A$, this yields $x\RAC y$.
\end{proof}

\section{Rationalizability and Consistency}
\label{sec:localratChars}

The unifying theme of the following propositions is that expansion consistency is deeply intertwined with local rationalizability, even without imposing contraction consistency. Three classic characterizations of rationalizability can be obtained as corollaries. 

In contrast to our original belief, we are not the first to consider local rationalizability and discover the preceding propositions. 
Propositions \ref{thm:Gamma}, \ref{thm:GammaAizerman}, and \ref{thm:BetaPlus} were  shown by \citet[][Thms.~6, 7, and 2]{Tyso08a} and then again by \citet[][Thms.~7, 8, and 9]{Dugg19a}, apparently unaware of \citeauthor{Tyso08a}'s work. The core of \Cref{thm:Gamma} goes back to even earlier work by \citet[][Thm.~4]{Aize85a}.
We provide proofs of these statements in order to have a self-contained presentation with consistent notation before we extend these results in Sections \ref{sec:results} and \ref{sec:socialchoice}.

We start with the most central characterization, showing the equivalence between local rationalizability and~$\gamma$. This equivalence will later be leveraged to construct (social) choice functions that satisfy~$\gamma$.

\begin{proposition}[\citealp{Tyso08a}]
	\label{thm:Gamma}
	A choice function satisfies~$\gamma$ if and only if it is locally rationalizable. Moreover, any such function is rationalized by its family of local revealed preference relations. 
\end{proposition}

\begin{proof}
	Let $C$ be a choice function.
	For the direction from right to left, assume that $(R^A)_A$ locally rationalizes $C$ and let $x \in C(A)\cap C(B)$.
	Now consider an arbitrary $y\in A\cup B$. If $y\in A$, then $x\in C(A)$ implies $x \RA y$. Otherwise, $y \in B$,  and $x \RB y$ because $x\in C(B)$. In both cases, we can apply condition \ref{it:inclusion} of 	\Cref{Def:UpRat} to obtain $x\mathrel{R^{A\cup B}}y$. Since $y$ was arbitrary, we have $x \R^{A\cup B} y$ for all $y\in A\cup B$. Hence, $x\in \max_{R^{A\cup B}} A\cup B=C(A\cup B)$.
	
	For the direction from left to right, we will leverage the family of local revealed preference relations $(R^A_C)_A$. Assume that $C$ satisfies~$\gamma$ and let $A$ be a feasible set. By \Cref{lem:LRP} we only need to show that $C(A)=\max_{R_C^A} A$.
	For the inclusion from left to right, 
	let $x\in C(A)$. We then have by definition that $x \RAC y$ for all $y\in A$ and consequently that $x\in \max_{R_C^A} A$. 
	For the inclusion from right to left, let $x$ be maximal in A. Now, let $y\in A$ be given. 
	By maximality of $x$ and completeness of $\RAC$, we know that $x\RAC y$. 
	Hence, there is some $B_y\subseteq A$ such that $x\in C(B)$ and $y\in B_y$.
	Since $A$ is finite and $y$ was arbitrary, we can repeatedly apply~$\gamma$ to obtain  $x\in C(\cup_{y\in A} B_y)=C(A)$.
\end{proof}

The family of locally revealed preference relations is not the only family of relations locally rationalizing a choice function. This is hardly surprising because locally rationalizing relations for larger feasible sets may contain additional indifferences that are irrelevant for the set of maximal elements. Note that these indifferences are not in conflict with Condition \ref{it:inclusion} of \Cref{Def:UpRat}.

It can be shown that the family of local revealed preference relations is the \emph{finest} family of locally rationalizing relations in the following sense: we say that $(R^A)_A$ \emph{is finer than} $(\tilde{R}^A)_A$, if $R^A\subseteq \tilde{R}^A$ for all $A\in \mathcal{U}$. Thus, when restricting attention to locally rationalizing relations that are minimal in this sense, uniqueness of local revealed preference is retained.

We can use the characterization of~$\gamma$ to obtain a classic result of Sen as a corollary.
Since the latter involves $\alpha$, we need the additional observation that rationalizability implies $\alpha$. To see this, let $R$ be a relation that rationalizes choice function $C$ and $x\in C(A)\cap B$. Then, for all $y\in B$, $x\R y$. Hence, $x\in C(B)$.
\begin{lemma}
	\label{lem:RatImpliesAlpha}
	Every rationalizable choice function satisfies $\alpha$.
\end{lemma}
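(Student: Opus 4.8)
The plan is to unwind the definition of rationalizability and observe that maximality with respect to a fixed relation is preserved under passing to feasible subsets. Concretely, let $C$ be rationalized by a complete and acyclic relation $R$, so that $C(A)=\max_R A$ for every feasible set $A$. To verify $\alpha$, I would fix feasible sets $B\subseteq A$ together with an alternative $x\in C(A)\cap B$, and aim to show $x\in C(B)$.

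The key step is the following monotonicity observation. Since $x\in C(A)=\max_R A$, there is no $y\in A$ with $y\RP x$. As $B\subseteq A$, the set of potential dominators only shrinks, so a fortiori there is no $y\in B$ with $y\RP x$ either. Hence $x\in\max_R B=C(B)$, which is exactly the inclusion $C(A)\cap B\subseteq C(B)$ demanded by $\alpha$. Equivalently—and this matches the phrasing already hinted at in the text—one can argue through completeness of $R$: maximality of $x$ in $A$ yields $x\R y$ for every $y\in A$, and restricting the quantifier to $y\in B\subseteq A$ preserves $x\R y$ for all $y\in B$, so $x$ is again maximal in $B$.

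I do not expect any genuine obstacle here: the argument is essentially a one-line application of the fact that a smaller feasible set admits fewer comparisons. The only point requiring a little care is to invoke the inclusion in the correct direction, namely $B\subseteq A$, so that the maximality of $x$ already established over the larger set $A$ transfers to the smaller set $B$. Note that acyclicity is not needed for this particular inclusion (it only guarantees that the choice sets are nonempty, which is already built into $C$ being a choice function); the claim holds for any relation whose induced $\max$ defines $C$.
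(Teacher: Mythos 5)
Your argument is correct and coincides with the paper's own one-line proof: from $x\in C(A)=\max_R A$ and $B\subseteq A$ one gets $x\R y$ for all $y\in B$ (equivalently, no $y\in B$ strictly dominates $x$), hence $x\in\max_R B=C(B)$. Your added remark that acyclicity is not needed for this inclusion is accurate but inessential.
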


We now obtain Sen's characterization of rationalizable choice functions as follows. For any rationalizable choice function $C$, \Cref{thm:Gamma} and \Cref{lem:RatImpliesAlpha} imply that $C$ satisfies $\alpha$ and~$\gamma$. Conversely, if $C$ satisfies $\alpha$ and~$\gamma$, \Cref{thm:Gamma} and \Cref{lem:RevPrefandAlpha} imply that $R_C$ rationalizes $C$.

\begin{corollary} [\citealp{Sen71a}]
	\label{thm:Sen}
		A choice function satisfies $\alpha$ and~$\gamma$ if and only if it is rationalizable. Moreover, any such function is rationalized by its revealed preference relation.
\end{corollary}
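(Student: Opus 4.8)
The plan is to derive Sen's characterization directly from the machinery already established, since the excerpt explicitly sets up \Cref{thm:Gamma}, \Cref{lem:RatImpliesAlpha}, and \Cref{lem:RevPrefandAlpha} as the three ingredients. The statement has two implications plus a ``moreover'' clause identifying the rationalizing relation, so I would treat it as a pure corollary rather than prove anything from scratch.

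\textbf{Forward direction (rationalizable implies $\alpha$ and~$\gamma$).} Suppose $C$ is rationalizable. First I would invoke \Cref{lem:RatImpliesAlpha} to conclude that $C$ satisfies $\alpha$. For~$\gamma$, the key observation is that rationalizability via a global relation $R$ is a special case of local rationalizability: as noted just after \Cref{Def:UpRat}, setting $R^A = R \cap (A \times A)$ for every feasible set $A$ yields a family satisfying Conditions \ref{it:acyclicComplete}--\ref{it:inclusion}, because restriction to subsets only shrinks the relation and preserves acyclicity and completeness. Hence $C$ is locally rationalizable, and \Cref{thm:Gamma} then gives that $C$ satisfies~$\gamma$.

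\textbf{Backward direction (both conditions imply rationalizability).} Conversely, assume $C$ satisfies $\alpha$ and~$\gamma$. From~$\gamma$ alone, \Cref{thm:Gamma} tells me that $C$ is locally rationalized by its family of local revealed preference relations $(R_C^A)_A$, so in particular $C(A) = \max_{R_C^A} A$ for every $A$. The task is to collapse this family into a single global relation. This is exactly where $\alpha$ enters: by \Cref{lem:RevPrefandAlpha}, $\alpha$ forces $R_C^A = R_C \cap (A \times A)$ for all $A$. Substituting, I get $C(A) = \max_{R_C \cap (A \times A)} A = \max_{R_C} A$ for every feasible set $A$, where the last equality holds because maximality within $A$ only ever compares elements of $A$. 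Since $R_C$ is complete and acyclic (both recalled in the preliminaries), this witnesses that $R_C$ rationalizes $C$, which simultaneously establishes the ``moreover'' clause.

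\textbf{Main obstacle.} There is no genuine difficulty here; the only point demanding care is the interface between the local and global pictures. Specifically, I must make sure the ``moreover'' clause of \Cref{thm:Gamma} (that~$\gamma$ yields local rationalization \emph{by the local revealed preference family}, not merely by some family) is used, so that I can feed the concrete relations $(R_C^A)_A$ into \Cref{lem:RevPrefandAlpha}. If I only extracted abstract existence of a locally rationalizing family, I could not identify it with $R_C$ and the ``moreover'' statement about the revealed preference relation would not follow. The one routine verification worth stating explicitly is the passage $\max_{R_C \cap (A\times A)} A = \max_{R_C} A$, which is immediate from the definition of $\max_R A$ since the defining condition quantifies only over $y \in A$.
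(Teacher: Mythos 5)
Your proposal is correct and follows essentially the same route as the paper: \Cref{lem:RatImpliesAlpha} plus the observation that global rationalizability implies local rationalizability for the forward direction, and \Cref{thm:Gamma} combined with \Cref{lem:RevPrefandAlpha} to identify the local revealed preference family with $R_C$ for the converse. The paper states this in two sentences; you merely spell out the routine step $\max_{R_C\cap(A\times A)}A=\max_{R_C}A$ and the completeness/acyclicity of $R_C$, which the paper leaves implicit.
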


As it turns out, requiring that all locally rationalizing relations are quasi-transitive leads to choice functions that not only satisfy~$\gamma$ but also $\epsilon^+$.

\begin{proposition}[\citealp{Tyso08a}]
	\label{thm:GammaAizerman}
	A choice function satisfies~$\gamma$ and $\epsilon^+$ if and only if it is locally rationalized by a family of quasi-transitive relations.
\end{proposition}
\begin{proof}
	For the direction from left to right, let $C$ be a choice function satisfying~$\gamma$ and $\epsilon^+$. 
	By \Cref{thm:Gamma}, we already know that the local revealed preference relations locally rationalize $C$. 
	In addition, we now show that each of these relations is quasi-transitive.
	Let $A$ be a feasible set and $x,y,z\in A$ such that $x \PAC y$ and $y \PAC z$. It needs to be shown that $x \PAC z$ or, in other words, that $z$ is not chosen from any subset of $A$ that contains $x$.
	For this, consider an arbitrary $B\subseteq A$ with $x,z\in B$.
	Note that if $y\in B$, we directly obtain $z\notin C(B)$ from \Cref{Def:LRP}.
	Otherwise, set $B_y\coloneqq B\cup \{y\}$.
	By $x,y\in B_y$, we have $y\notin C(B_y)$, which implies $C(B_y)\subseteq B \subseteq B_y$. It follows from $\epsilon^+$ that $C(B)\subseteq C(B_y)$.
	Since $B_y\subseteq A$ and $y,z\in B_y$, we have $z\notin C(B_y)$.
	Hence we can conclude $z\notin C(B)$.
	Since $B$ was arbitrary, we obtain $x\mathrel{P^A_C} z$.
	
	For the other direction, let $(R^A)_A$ be a family of quasi-transitive preference relations which locally rationalizes choice function $C$.
	We already know that $C$ satisfies~$\gamma$ by \Cref{thm:Gamma}.
	For $\epsilon^+$, let $A,B$ be feasible sets such that $C(A)\subseteq B\subseteq A$. It needs to be shown that $C(B)\subseteq C(A)$.	
	In other words, it suffices to show that for any $z\in B\setminus C(A)$, it holds that $z\notin C(B)$. 
	Since $z\notin C(A)$, there must be some $x_1\in A$, such that $x_1\PA z$.
	If $x_1\in B$, then by local rationalizability $x_1\PB z$ and hence $z\notin C(B)$. Otherwise, by $C(A)\subseteq B$, it must be that $x_1\notin C(A)$. 
	Hence there must be $x_2\in A$ with $x_2 \PA x_1$. 
	It follows from the quasi-transitivity of $R^A$ that $x_2 \PA z$.
	Using induction, quasi-transitivity of $R^A$, and finiteness of $A$, there must eventually be some $x_\ell\in B$ with $x_\ell \PA z$.
	Hence, $x_\ell \PB z$ and thus $z\notin C(B)$.
\end{proof}

It follows from \Cref{thm:Gamma} that every quasi-transitively locally rationalizable choice function is rationalized by its family of local revealed preference relations. Additionally, this constitutes the finest family of quasi-transitive locally rationalizing relations. Note that other families of rationalizing relations may not be quasi-transitive.

Again, we obtain a classic characterization as a corollary. Let $C$ be a quasi-transitively rationalizable choice function. By \Cref{thm:GammaAizerman} and \Cref{lem:RatImpliesAlpha}, $C$ satisfies $\alpha$,~$\gamma$ and $\epsilon^+$. For the converse direction, let $C$ satisfy $\alpha$,~$\gamma$ and $\epsilon^+$. Then by \Cref{thm:GammaAizerman} and \Cref{lem:RevPrefandAlpha} we have that $R_C$ is quasi-transitive and rationalizes $C$.

\begin{corollary}[\citealp{Schw76a}]
	\label{thm:Schwartz}
	A choice function satisfies $\alpha$,~$\gamma$ and $\epsilon^+$ if and only if it is quasi-transitively rationalizable.
\end{corollary}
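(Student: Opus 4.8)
The plan is to prove \Cref{thm:Schwartz} as a direct corollary of \Cref{thm:GammaAizerman}, mirroring exactly the way \Cref{thm:Sen} was obtained from \Cref{thm:Gamma}. The two directions are handled separately, and the only extra ingredient beyond \Cref{thm:GammaAizerman} is the interaction between $\alpha$ and the family of local revealed preference relations, which is already packaged in \Cref{lem:RevPrefandAlpha} and \Cref{lem:RatImpliesAlpha}.

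For the forward direction, suppose $C$ is quasi-transitively rationalizable, say by a quasi-transitive and complete relation $R$. The plan is first to invoke \Cref{lem:RatImpliesAlpha} to get $\alpha$ (noting that quasi-transitive rationalizability is in particular rationalizability, since quasi-transitivity implies acyclicity). Then, setting $R^A \coloneqq R|_A$ for each feasible set $A$ exhibits $C$ as locally rationalized by a family of quasi-transitive relations: Conditions \ref{it:acyclicComplete} and \ref{it:maximalChoice} hold because $R$ rationalizes $C$, Condition \ref{it:inclusion} holds trivially for restrictions of a fixed relation, and each $R|_A$ inherits quasi-transitivity from $R$. By the right-to-left direction of \Cref{thm:GammaAizerman}, $C$ then satisfies~$\gamma$ and $\epsilon^+$, giving all three of $\alpha$,~$\gamma$, and $\epsilon^+$.

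For the converse, suppose $C$ satisfies $\alpha$,~$\gamma$, and $\epsilon^+$. The left-to-right direction of \Cref{thm:GammaAizerman} tells us that the family of local revealed preference relations $(R^A_C)_A$ locally rationalizes $C$ and that every $R^A_C$ is quasi-transitive. The plan is then to collapse this family into a single global relation using $\alpha$: by \Cref{lem:RevPrefandAlpha}, $R^A_C = R_C \cap (A\times A)$ for every $A$, so each local relation is just the restriction of the revealed preference relation $R_C$ to $A$. From $C(A) = \max_{R^A_C} A = \max_{R_C} A$ for all $A$ (the maximal elements depend only on the restriction of the relation to $A$) it follows that $R_C$ rationalizes $C$. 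Finally, quasi-transitivity of $R_C$ follows from quasi-transitivity of the $R^A_C$: any instance $x\RP_C y \RP_C z$ involves only the three alternatives $x,y,z$, so taking $A=\{x,y,z\}$ and using $R^A_C = R_C\cap(A\times A)$ transports the quasi-transitivity of $R^A_C$ back to $R_C$. Hence $C$ is quasi-transitively rationalized by $R_C$.

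The main subtlety to get right is this last transfer of quasi-transitivity from the local relations to the global one: quasi-transitivity is a ternary condition, so one must argue that it suffices to check it on three-element feasible sets and that \Cref{lem:RevPrefandAlpha} correctly identifies the local and global strict parts there. This is routine but is the only step that genuinely uses the structure of quasi-transitivity rather than a verbatim copy of the derivation of \Cref{thm:Sen}; everything else is a faithful adaptation of the earlier corollary argument.
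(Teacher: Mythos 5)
Your proposal is correct and follows essentially the same route as the paper: the forward direction combines \Cref{lem:RatImpliesAlpha} with the right-to-left part of \Cref{thm:GammaAizerman} (via the restriction family $R|_A$), and the converse combines the left-to-right part of \Cref{thm:GammaAizerman} with \Cref{lem:RevPrefandAlpha} to show that $R_C$ is quasi-transitive and rationalizes $C$. The only difference is that you spell out the transfer of quasi-transitivity from the local relations to $R_C$ via three-element sets, which the paper leaves implicit.
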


It turns out that choice functions that are locally rationalized via transitive relations are characterized by the strongest expansion consistency condition $\beta^+$.

\begin{proposition}[\citealp{Tyso08a}]
	\label{thm:BetaPlus}
	A choice function satisfies $\beta^+$ if and only if it is locally rationalized by a family of transitive relations.
\end{proposition}
\begin{proof} 
	For the direction from left to right, let $C$ be a choice function satisfying $\beta^+$.
	Since $\beta^+$ implies~$\gamma$, \Cref{thm:Gamma} applies and $C$ must be locally rationalized by its family of local revealed preference relations.
	Now, let $A$ be a feasible set.
	We show that $R^A_C$ is transitive.
	Let $x,y,z$ be given such that there are feasible sets $B_1,B_2\subseteq A$ with $x\in C(B_1)$, $y\in C(B_2)$, $y\in B_1$, and $z\in B_2$. 
	It needs to be shown that there is some feasible $D\subseteq A$ such that $D$ contains $z$ and $x$ is chosen from $D$.
	Let $D\coloneqq B_1\cup B_2$. Observe that by $\beta^+$ and $B_1\subseteq D$, if $C(D)\cap B_1 \neq \emptyset$, then $x\in C(D)$. 
	If $C(D)\cap B_2=\emptyset$, this follows from nonemptiness of choice sets. Otherwise, $\beta^+$ applied on $B_2 \subseteq D$ and $y\in C(B_2)$ implies $y\in C(D)$.
	Since $y\in B_1$, this concludes the first half of the proof.
	
	For the direction from right to left, let $(R^A)_A$ locally rationalize $C$ such that all relations are transitive. Now, let $A,B$ be feasible sets such that $B\subseteq A$ and fix $y\in C(A)\cap B\neq \emptyset$. Moreover, let $x\in C(B)$. Then by completeness of $R^B$ and maximality of $x$ in $B$ we obtain $x\RB y$. By local rationalizability, we also have $x \RA y$. Maximality of $y$ in A yields $y\RA z$ for all $z\in A$. Transitivity now implies $x\RA z$ for all $z\in A$. Hence $x$ is maximal in $A$ and it follows that $x\in C(A)$.
\end{proof}

It follows from \Cref{thm:Gamma} that every transitively locally rationalizable choice function is rationalized by its family of local revealed preference relations. Again, this constitutes the finest family of transitive locally rationalizing relations. Note that other families of rationalizing relations may fail to be transitive.
Interestingly, \citet{Bord76a} already observed that $\beta^+$ implies that the revealed preference relation is transitive.

Again, we obtain a classic result as a corollary. 
Let $C$ be a transitively rationalizable choice function. By \Cref{thm:BetaPlus} and \Cref{lem:RatImpliesAlpha}, $C$ satisfies $\alpha$ and $\beta^+$.
Conversely, let $C$ satisfy $\alpha$ and $\beta^+$. Then by \Cref{thm:BetaPlus} and \Cref{lem:RevPrefandAlpha} we have that $R_C$ is transitive and $C$ is rationalized by $R_C$.

\begin{corollary} [\citealp{Arro59a,Bord76a}]
	\label{thm:Bordes}
	A choice function satisfies $\alpha$ and $\beta^+$ if and only if it is transitively rationalizable.
\end{corollary}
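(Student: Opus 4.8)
The plan is to derive both directions by assembling the equivalence of $\beta^+$ with transitive local rationalizability (\Cref{thm:BetaPlus}) with the two bridging lemmas relating $\alpha$ to rationalizability and to revealed preference (\Cref{lem:RatImpliesAlpha} and \Cref{lem:RevPrefandAlpha}); no new combinatorial argument is required, since all the substantive work was already done in proving \Cref{thm:BetaPlus}.

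For the direction from transitive rationalizability to $\alpha$ and $\beta^+$, suppose $C$ is rationalized by a single transitive and complete relation $R$. Since transitive rationalizability is in particular rationalizability, \Cref{lem:RatImpliesAlpha} immediately yields $\alpha$. To obtain $\beta^+$, I would verify that the family $R^A \coloneqq R \cap (A \times A)$ locally rationalizes $C$: each $R^A$ is acyclic and complete, $\max_{R^A} A = \max_R A = C(A)$, and $R^B \subseteq R^A$ whenever $B \subseteq A$, so Conditions \ref{it:acyclicComplete}, \ref{it:maximalChoice}, and \ref{it:inclusion} of \Cref{Def:UpRat} hold. As every $R^A$ inherits transitivity from $R$, this is a family of transitive locally rationalizing relations, whence \Cref{thm:BetaPlus} gives $\beta^+$.

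For the converse, suppose $C$ satisfies $\alpha$ and $\beta^+$. The candidate rationalizing relation is the revealed preference relation $R_C$, which is complete and acyclic by construction, so it remains only to show that it is transitive and that it rationalizes $C$. Since $\beta^+$ implies~$\gamma$, \Cref{thm:Gamma} shows that $C$ is locally rationalized by its family $(R^A_C)_A$, and the left-to-right part of the proof of \Cref{thm:BetaPlus} shows that each $R^A_C$ is transitive. I would lift this to global transitivity using that transitivity is a three-element property: given $x \mathrel{R_C} y$ and $y \mathrel{R_C} z$, put $A \coloneqq \{x, y, z\}$; by $\alpha$ and \Cref{lem:RevPrefandAlpha} we have $R^A_C = R_C \cap (A \times A)$, so $x \mathrel{R^A_C} y$ and $y \mathrel{R^A_C} z$, and transitivity of $R^A_C$ gives $x \mathrel{R^A_C} z$, hence $x \mathrel{R_C} z$. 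Finally, \Cref{lem:RevPrefandAlpha} gives $\max_{R^A_C} A = \max_{R_C} A$ for every $A$, so $C(A) = \max_{R^A_C} A = \max_{R_C} A$ and $R_C$ is the desired transitive rationalizing relation.

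The only work lies in routine bookkeeping---checking the three defining conditions of local rationalizability for the restricted family $R \cap (A \times A)$, and the elementary observation that the maximal elements of a complete relation are unaffected when it is restricted to the feasible set. Neither poses a genuine obstacle; the single point deserving a moment's care is the passage from transitivity of each finite relation $R^A_C$ to transitivity of the global relation $R_C$, and this is exactly what restricting to the three relevant alternatives resolves.
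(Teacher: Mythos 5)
Your proof is correct and takes essentially the same route as the paper's: the forward direction via \Cref{lem:RatImpliesAlpha} together with \Cref{thm:BetaPlus} (applied to the restricted family $R\cap(A\times A)$), and the converse via the transitivity of the local revealed preference relations established in the proof of \Cref{thm:BetaPlus} combined with \Cref{lem:RevPrefandAlpha}. The details you spell out---verifying the three conditions of \Cref{Def:UpRat} for the restriction family and lifting transitivity of each $R^A_C$ to $R_C$ via three-element sets---are exactly the routine steps the paper's terse derivation leaves implicit.
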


\section{Gamma$^+$ and Gamma Hull} 
\label{sec:results}

In this section, we expand the theory of local rationalizability by introducing two new concepts. First, we propose a strengthening of $\gamma$ that precisely characterizes choice functions that are locally rationalizable via PIP-transitive relations. Secondly, we define the $\gamma$-hull of a choice function as its finest coarsening that satisfies $\gamma$. 

\subsection{Gamma$^+$} 

\citet{Schw76a} has proposed a transitivity notion called PIP-transitivity that lies in between quasi-transitivity and full transitivity.
He argues that, in some cases, this notion represents human behavior more accurately than standard transitivity. \citeauthor{Schw86a} states that while transitivity is equivalent to representation by a utility function $u$,
PIP-transitivity is equivalent to representation by a utility function $u$ and a non-negative discriminatory function $\delta$. The idea is that some $a$ is only perceived to be strictly better than some $b$, if the increase in utility is noticeable, which is modeled by $u(a)>u(b)+\delta(b)$.

Let $R$ be a relation on $U$. We say that $R$ is \emph{PIP-transitive} if for all (not necessarily distinct) $x,y,z,w\in U$,
\[
\text{$x \RP y$, $y\I z$, and $z\RP w$ implies $x\RP w$.}\tag{PIP-transitivity}
\]	
	
Graphically, this condition can be represented as below. The wiggly line represents indifference, while the arrows stand for strict preference. The double edge from $x$ to $w$ denotes the consequence of the implication. 
\begin{center}
	\begin{tikzpicture}[node distance=\nd,vertex/.style={circle,draw,minimum size=2*\ra,inner sep=1pt}
		]
		\node[vertex] (x) 	           {$\mathwordbox{x}{x}$};
		\node[vertex] (y) [right=of x] {$\mathwordbox{y}{x}$};
		\node[vertex] (z) [below=of y] {$\mathwordbox{z}{y}$};
		\node[vertex] (w) [left=of z] {$\mathwordbox{w}{z}$};
		\draw [-Latex] (x) to (y);
		\draw [-Latex] (z) to (w);
		\draw [-,decorate,decoration=snake] (y) to (z);
		\draw [-Latex, double] (x) to (w);
	\end{tikzpicture}
\end{center}

We now propose a new expansion consistency condition called $\gamma^+$ and show that it is equivalent to local rationalizability by families of PIP-transitive relations.
\begin{definition}
	\label{Def:GammaPlus}
	A choice function $C$ satisfies $\gamma^+$ if for all feasible sets $A,B$, 
	\[C(A)\subseteq C(A\cup B) \quad\text{or}\quad C(B)\subseteq C(A\cup B)\text.\tag{$\gamma^+$}\]
\end{definition}

Clearly, $\gamma^+$ implies~$\gamma$ since $C(A)\cap C(B)$ is a subset of both $C(A)$ and $C(B)$. Moreover, $\gamma^+$ implies $\epsilon^+$.
To see this, let $C(A)\subseteq B\subseteq A$. Then we use $\gamma^+$ with $\hat B = B$ and $\hat A = A\setminus B$. It must be that $C(\hat B)\subseteq C(A)$, as $C(\hat A)\cap C(A) = \emptyset$.\footnote{$\gamma^+$ is even stronger than the conjunction of~$\gamma$ and $W4$, a condition that was introduced by \citet{Schw76a} for characterizing PIP-transitivity together with $\alpha$. 
}

Analogous to the previous results, we present a characterization of local rationalizability via PIP-transitive relations using $\gamma^+$.
\begin{theorem}
	\label{thm:PIPRat}
	A choice function satisfies $\gamma^+$ if and only if it is locally rationalized by a family of PIP-transitive relations.
\end{theorem}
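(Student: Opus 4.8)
The plan is to prove the two implications separately, both times routing through the family of local revealed preference relations $(R^A_C)_A$, exactly as in the proofs of \Cref{thm:Gamma}, \Cref{thm:GammaAizerman}, and \Cref{thm:BetaPlus}. Since $\gamma^+$ implies $\gamma$, the forward direction can invoke \Cref{thm:Gamma} to conclude that a choice function satisfying $\gamma^+$ is already locally rationalized by $(R^A_C)_A$; it then remains only to verify that each $R^A_C$ is PIP-transitive. The backward direction is a direct verification of $\gamma^+$ from an arbitrary family of PIP-transitive locally rationalizing relations.

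For the direction from left to right, I would fix a feasible set $A$ and alternatives $x,y,z,w\in A$ with $x\PAC y$, $y\IAC z$, and $z\PAC w$, aiming for $x\PAC w$. Arguing by contradiction and using completeness of $R^A_C$, if $x\PAC w$ fails then $w\RAC x$, so there is a witness $B^*\subseteq A$ with $w\in C(B^*)$ and $x\in B^*$. From $y\IAC z$ I would extract only the direction $y\RAC z$, giving a witness $D_1\subseteq A$ with $y\in C(D_1)$ and $z\in D_1$. The decisive step is to apply $\gamma^+$ to $D_1$ and $B^*$: on $G\coloneqq D_1\cup B^*\subseteq A$, which contains all of $x,y,z,w$, the condition forces $C(D_1)\subseteq C(G)$ or $C(B^*)\subseteq C(G)$. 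In the first case $y\in C(G)$, contradicting $x\PAC y$ through \Cref{equ:strict_lrp} (since $x,y\in G\subseteq A$); in the second case $w\in C(G)$, contradicting $z\PAC w$. Both branches collapse, so $x\PAC w$. I expect the clarifying insight here to be that a single witness suffices and that the two-way split of $\gamma^+$ resolves each branch against a different strict preference, so that neither the full force of the indifference $y\IAC z$ nor any auxiliary construction is required.

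For the direction from right to left, let $(R^A)_A$ be a family of PIP-transitive relations locally rationalizing $C$, and suppose for contradiction that $\gamma^+$ fails for some $A,B$, witnessed by $a\in C(A)\setminus C(A\cup B)$ and $b\in C(B)\setminus C(A\cup B)$. Writing $R\coloneqq R^{A\cup B}$, with strict part $P$ and symmetric part $I$, I would first locate blockers: since $a\notin\max_R(A\cup B)$ there is $p$ with $p\RP a$, and likewise $q$ with $q\RP b$. Using Condition~\ref{it:inclusion} (so that $R^A\subseteq R$ and $R^B\subseteq R$) together with completeness, I would argue that $p\notin A$ and $q\notin B$: if $p\in A$, then the failure of $a\R p$ descends to the failure of $a\RA p$, which by completeness of $R^A$ gives $p\PA a$ and contradicts $a\in\max_{R^A}A$. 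Hence $p\in B$ and $q\in A$, and maximality of $a$ in $A$ and of $b$ in $B$ then yields the weak preferences $a\R q$ and $b\R p$, producing the configuration $p\RP a\R q\RP b\R p$ inside the single relation $R$.

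To close the argument I would split on whether $a\R q$ is strict or an indifference. First noting that PIP-transitivity implies quasi-transitivity (take the middle indifference reflexive, $t\I t$), the case $a\RP q$ chains $p\RP a\RP q\RP b$ by quasi-transitivity to $p\RP b$, while the case $a\I q$ applies PIP-transitivity directly to $p\RP a$, $a\I q$, $q\RP b$ to obtain $p\RP b$. In either case $p\RP b$ contradicts the established $b\R p$, which is the desired contradiction. The hard part will be the bookkeeping that pins $p$ and $q$ into the opposite sets via Condition~\ref{it:inclusion}: this is exactly what blocks a contradiction with local maximality and forces the genuine indifference subcase, and that subcase is precisely where PIP-transitivity, rather than the weaker quasi-transitivity governing $\gamma$ and $\epsilon^+$, becomes indispensable.
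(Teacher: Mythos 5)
Your proposal is correct and takes essentially the same approach as the paper: the forward direction is the paper's argument phrased contrapositively (the same application of $\gamma^+$ to a witness for $y\RAC z$ and a subset containing $x$ and $w$, with each branch of $\gamma^+$ refuted by one of the two strict preferences), and the backward direction is the paper's cycle argument with the symmetric choice of deriving a strict preference against $b$ rather than against $a$ before invoking the indifference case of PIP-transitivity.
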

\begin{proof}
	Let $C$ be a choice function.
	For the direction from left to right, let $C$ satisfy $\gamma^+$. Since $\gamma^+$ implies~$\gamma$, $C$ is locally rationalized by $(R^A_C)_A$. Let $A$ be a feasible set. We now show that the local revealed preference relation $R^A_C$ is PIP-transitive. Let $x\PAC y$, $y\IAC z$ and $z\PAC w$.

	\begin{center}
		\begin{tikzpicture}[node distance=\nd,vertex/.style={circle,draw,minimum size=2*\ra,inner sep=1pt}
			]
			\node[vertex] (x) 	           {$\mathwordbox{x}{x}$};
			\node[vertex] (y) [right=of x] {$\mathwordbox{y}{x}$};
			\node[vertex] (z) [below=of y] {$\mathwordbox{z}{y}$};
			\node[vertex] (w) [left=of z] {$\mathwordbox{w}{z}$};
			\draw [-Latex] (x) to (y);
			\draw [-Latex] (z) to (w);
			\draw [-,decorate,decoration=snake] (y) to (z);
		\end{tikzpicture}
	\end{center}
	
	It follows from \Cref{equ:strict_lrp} that $w$ cannot be chosen from subsets of $A$ when $z$ is present, and the same holds for $y$ when $x$ is present.
	We now need to show that $x\PAC w$, i.e., $w$ cannot be chosen from subsets of $A$ in the presence of $x$.
	To this end, let $\widehat{A}\subseteq A$ with $x,w\in \widehat{A}$. 
	Further, let $\widehat{B}\subseteq A$ be a witness for $y\IAC z$, i.e., $y \in C(\widehat{B})$ and $z\in \widehat{B}$.
	Now, apply $\gamma^+$ on $\widehat{A}$ and $\widehat{B}$. Since $y\in C(\widehat{B})$ and $x$ is present in $\widehat{A}$, it cannot be that $C(\widehat{B})\subseteq 
	C(\widehat{A}\cup \widehat{B})$.
	Therefore, it has to be that $C(\widehat{A})\subseteq C(\widehat{A}\cup \widehat{B})$. 
	Since $w \notin C(\widehat{A}\cup \widehat{B})$ due to the presence of $z$, we conclude that $w 
	\notin C(\widehat{A})$.
	
	For the other direction, let $(R^A)_A$ locally rationalize $C$ and let all relations be PIP-transitive.
	Let $A,B$ be feasible sets and assume for contradiction that neither $C(A)$ nor $C(B)$ is a subset of $C(A\cup B)$.
	Then, there are $a\in C(A)\setminus C(A\cup B)$ and $b\in C(B)\setminus C(A\cup B)$.	
	Hence there is some $x\in A\cup B$ with $x\mathrel{P^{A\cup B}}a$ and some $y\in A\cup B$ with $y\mathrel{P^{A\cup B}}b$.
	Since $a\in C(A)$, it must be that $x\in B$, and since $b\in C(B)$, we have that $b\RB x$. By local rationalizability, we even have $b\mathrel{R^{A\cup B}} x$.
	There are now two possibilities. 
	First, it could be that $b\mathrel{P^{A\cup B}}x$. By applying quasi-transitivity, we then obtain the following relationships for $R^{A\cup B}$.
	
	\begin{center}
		\begin{tikzpicture}[node distance=\nd,vertex/.style={circle,draw,minimum size=2*\ra,inner sep=1pt}
			]
			\node[vertex] (y) 	           {$\mathwordbox{y}{y}$};
			\node[vertex] (b) [right=of y] {$\mathwordbox{b}{y}$};
			\node[vertex] (x) [below=of b] {$\mathwordbox{x}{b}$};
			\node[vertex] (a) [left=of x] {$\mathwordbox{a}{x}$};
			\draw [-Latex] (y) to (b);
			\draw [-Latex] (x) to (a);
			\draw [-Latex] (b) to (x);
			\draw [-Latex, double] (y) to (x);
			\draw [-Latex, double] (b) to (a);
			\draw [-Latex, double] (y) to (a);
		\end{tikzpicture}
	\end{center}
	
	Otherwise, we have $b\mathrel{I^{A\cup B}}x$, which entails the following relationships for $R^{A\cup B}$ since PIP-transitivity can be applied.
	
	\begin{center}
		\begin{tikzpicture}[node distance=\nd,vertex/.style={circle,draw,minimum size=2*\ra,inner sep=1pt}
			]
			\node[vertex] (y) 	           {$\mathwordbox{y}{y}$};
			\node[vertex] (b) [right=of y] {$\mathwordbox{b}{y}$};
			\node[vertex] (x) [below=of b] {$\mathwordbox{x}{b}$};
			\node[vertex] (a) [left=of x] {$\mathwordbox{a}{x}$};
			\draw [-Latex] (y) to (b);
			\draw [-Latex] (x) to (a);
			\draw [-,decorate,decoration=snake] (b) to (x);
			\draw [-Latex, double] (y) to (a);
		\end{tikzpicture}
	\end{center}
	
	Since $b\in C(B)$, it must be that $y\in A$. Moreover, it follows from $a\in C(A)$ that $a\RA y$. By local rationalizability, we thus get $a\mathrel{R^{A\cup B}}y$, the desired contradiction to $y\mathrel{P^{A\cup B}}a$ in both cases.
\end{proof}

An overview of the characterizations of different notions of local rationalizability via choice consistency conditions is given in \Cref{fig:results}.

\begin{figure}[tbp]
\[\text{\xymatrix @R=1em {\text{transitive local rationalizability}&\text{iff}&&  \beta^+ \ar[d]&	 \\
		\text{PIP-transitive local rationalizability}&\text{iff}&&\gamma^+ \ar[dl]\ar[dr]&	 \\
		\text{quasi-transitive local rationalizability}&\text{iff}& \gamma\ar[d]&    \text{and} & \epsilon^+\\
		\text{(acyclic) local rationalizability}&\text{iff}& \gamma\ar[u]&   & \\			} 
}\]
\caption{Equivalences between local rationalizability and expansion consistency.} 
\label{fig:results}
\end{figure}
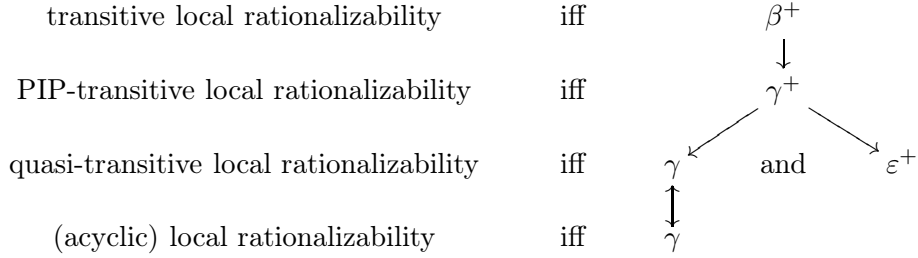

\subsection{Gamma Hull}

Whenever a choice function does not satisfy~$\gamma$, we can `repair' inconsistencies by adding elements to the choice sets. It turns out that there is a unique minimal way of turning any choice function into one that satisfies~$\gamma$.
To this end, first let $\mathcal{G}$ be a nonempty set of choice functions such that $C^*(A)\coloneqq \bigcap_{C \in G}{C(A)}\neq \emptyset$ for all $A\in\mathcal{U}$. Then, $C^*$ is a well-defined choice function that satisfies~$\gamma$ if all choice functions contained in $\mathcal{G}$ do. To see this, let $x \in C^*(A)\cap C^*(B)$ for some $A,B\in \mathcal{U}$. Now, consider an arbitrary $C \in \mathcal{G}$. By definition, $x\in C(A)\cap C(B)$. Since $C$ satisfies~$\gamma$, we have $x\in C(A\cup B)$, and since $C$ was arbitrary, we conclude $x\in C^*(A\cup B)$. Note that $C^*$ is a refinement of all choice functions in $\mathcal{G}$. 
We are now ready to define the~$\gamma$-hull of a choice function $C$, i.e., the unique finest coarsening of $C$ that satisfies~$\gamma$. 

\begin{definition}[$\gamma$-hull]\label{def:gammahull}
	Let $C$ be a choice function and $\mathcal{G}\coloneqq \{\widehat{C}\in \mathcal{C}\colon \text{$\widehat{C}$ satisfies~$\gamma$ and }C\subseteq \widehat{C}\}$. Then, the~$\gamma$-\emph{hull} $C^{\gamma}$ is a choice function given by 
	\begin{equation*}
		C^{\gamma} (A)\coloneqq \bigcap_{\widehat{C} \in \mathcal{G}}{\widehat{C}(A)} \text{ for all $A\in\mathcal{U}$.} 
	\end{equation*}
\end{definition}

While the existence and uniqueness of such a coarsening are already noteworthy, there is a very simple characterization of the~$\gamma$-hull: for any choice function $C$, the~$\gamma$-hull of $C$ is locally rationalized by the family of local revealed preference relations $(R^A_C)_A$. 

\begin{theorem}
	\label{thm:gammahull}
	For any choice function $C$ and feasible set $A$, $C^{\gamma}(A)=\max_{R_C^A} A$.
\end{theorem}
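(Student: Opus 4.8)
The plan is to introduce the candidate choice function $D(A)\coloneqq \max_{R_C^A} A$ and prove the two inclusions $C^{\gamma}(A)\subseteq D(A)$ and $D(A)\subseteq C^{\gamma}(A)$ separately, so that $D=C^{\gamma}$.

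First I would establish $C^{\gamma}(A)\subseteq D(A)$ by showing that $D$ is itself a member of the collection $\mathcal{G}$ from \Cref{def:gammahull}. To that end, observe that $D$ is a well-defined choice function, since each $R_C^A$ is acyclic by \Cref{lem:LRP} and hence $\max_{R_C^A} A$ is nonempty; that $D$ coarsens $C$, since the inclusion $C(A)\subseteq \max_{R_C^A} A$ recorded in the footnote to \Cref{lem:LRP} gives $C\subseteq D$; and that $D$ satisfies~$\gamma$. For the last point I would note that $(R_C^A)_A$ already fulfils Conditions \ref{it:acyclicComplete} and \ref{it:inclusion} of \Cref{Def:UpRat} by \Cref{lem:LRP}, and that it fulfils Condition \ref{it:maximalChoice} for $D$ by the very definition of $D$; thus $(R_C^A)_A$ locally rationalizes $D$, and \Cref{thm:Gamma} yields that $D$ satisfies~$\gamma$. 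Hence $D\in\mathcal{G}$, and since $C^{\gamma}(A)$ is by definition the intersection of $\widehat{C}(A)$ over all $\widehat{C}\in\mathcal{G}$, we get $C^{\gamma}(A)\subseteq D(A)$ at once.

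For the reverse inclusion I would fix an arbitrary $\widehat{C}\in\mathcal{G}$ and argue that $D(A)\subseteq \widehat{C}(A)$; intersecting over $\mathcal{G}$ then delivers $D(A)\subseteq C^{\gamma}(A)$. Since $\widehat{C}$ satisfies~$\gamma$, \Cref{thm:Gamma} gives $\widehat{C}(A)=\max_{R^A_{\widehat{C}}} A$, so everything reduces to comparing the local revealed preference relations of $C$ and $\widehat{C}$. Using the characterization \eqref{equ:strict_lrp}, $x\mathrel{P^A_{\widehat{C}}} y$ means $y\notin \widehat{C}(B)$ for every $B\subseteq A$ with $x,y\in B$; because $C\subseteq\widehat{C}$ gives $C(B)\subseteq\widehat{C}(B)$, this forces $y\notin C(B)$ for all such $B$, i.e.\ $x\PAC y$. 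Hence $P^A_{\widehat{C}}\subseteq P^A_C$, and since the maximal elements of an acyclic relation are exactly those with no strict predecessor, a smaller strict part can only enlarge the set of maximal elements, so $\max_{R_C^A} A\subseteq \max_{R^A_{\widehat{C}}} A=\widehat{C}(A)$, as required.

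The hard part will be getting the direction of the inclusions right. Passing from $C$ to the coarser $\widehat{C}$ enlarges the weak relation ($R_C^A\subseteq R^A_{\widehat{C}}$) while shrinking its strict part, and because maximality is governed by the strict part, this reversal is precisely what makes the maximal sets grow rather than shrink; keeping these two opposite inclusions straight, and reading off the strict part cleanly from \eqref{equ:strict_lrp}, is the only genuine subtlety. Combining the two inclusions yields $C^{\gamma}(A)=\max_{R_C^A} A$.
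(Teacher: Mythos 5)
Your proof is correct. The first inclusion ($C^{\gamma}(A)\subseteq\max_{R_C^A}A$) follows the paper exactly: show that $D\coloneqq\max_{R_C^A}$ lies in $\mathcal{G}$ by combining \Cref{lem:LRP} with \Cref{thm:Gamma} and the observation $C\subseteq D$. For the reverse inclusion, however, you take a genuinely different route. The paper argues directly with $C^{\gamma}$: a maximal $x$ has, for each $y\in A$, a witness $B_y\subseteq A$ with $x\in C(B_y)\subseteq C^{\gamma}(B_y)$, and repeated application of~$\gamma$ to $C^{\gamma}$ on the sets $B_y$ yields $x\in C^{\gamma}(A)$. You instead fix an arbitrary $\widehat{C}\in\mathcal{G}$, invoke the ``moreover'' clause of \Cref{thm:Gamma} a second time to write $\widehat{C}(A)=\max_{R^A_{\widehat{C}}}A$, and reduce everything to the monotonicity statement $P^A_{\widehat{C}}\subseteq P^A_C$, which you correctly read off from \eqref{equ:strict_lrp} using $C\subseteq\widehat{C}$. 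Both arguments ultimately exploit the same witnesses, but your version isolates a reusable structural fact---coarsening a choice function shrinks the strict parts of its local revealed preference relations and hence enlarges the maximal sets---whereas the paper's version is shorter and avoids a second appeal to the rationalization of $\widehat{C}$. Your closing remark about the direction of the inclusions is exactly the right thing to be careful about, and you handle it correctly.
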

\begin{proof}
	
	For the inclusion from left to right, \Cref{lem:LRP} and \Cref{thm:Gamma} imply that $\max_{R_C^A}$ is locally rationalized by $(R^A_C)_A$ and hence satisfies~$\gamma$.
	Since $C^{\gamma}$ is a refinement of all coarsenings of $C$ that satisfy $\gamma$, it suffices to show that $C(A) \subseteq \max_{R_C^A}$ 	for all $A$.
	To this end, let $x\in C(A)$ for some $A\in\mathcal{U}$. 
	Then, $x \mathrel{R_C^A} y$ for all $y\in A$.
	In other words, $x\in \max_{R_C^A}(A)$. 
	
	For the inclusion from right to left, let 
	$x\in \max_{R_C^A}(A)$. Then, $x$ is maximal with respect to $R_C^A$ and 
	for all $y\in A$, there is $B_y\in\mathcal{U}$ such that $y\in B_y\subseteq A$ and $x\in C(B_y)$. 
	It follows that $x\in C(B_y)\subseteq C^{\gamma} (B_y)$ since $C\subseteq C^{\gamma}$.
	Moreover, since $C^{\gamma}$ satisfies~$\gamma$, $x\in C^{\gamma} (\bigcup_{y\in A} B_y)=C^{\gamma} (A)$. 
\end{proof}

We consider the $\gamma$-hulls of common social choice functions in \Cref{sec:hull_scf}.

\section{Applications to Social Choice}
\label{sec:socialchoice}

In social choice theory, which studies choice functions that are based on the preference relations of multiple agents, choices between \emph{pairs} of alternatives---or, equivalently, the base relation---are typically given by majority rule. The problem is how these choices should be extended to arbitrary feasible sets in some reasonable and consistent way \citep[see, e.g.,][]{Lasl97a,BBH15a}. 
Few social choice functions are known to satisfy~$\gamma$. The notion of local rationalizability sheds more light on why this is the case. 
Any such function is induced by a family of locally rationalizing relations satisfying the conditions given in \Cref{Def:UpRat}.
Formally, let $\overline{R}$ be a complete relation on $U$ and $C$ be a choice function. We say that $C$ is \emph{based} on $\overline{R}$ if ${\baseRC} = {\baseR}$.

In this section, we will analyze existing social choice functions that satisfy $\gamma$ in terms of their locally rationalizing relations, present axiomatic characterizations of some of these functions, and study procedures for systematically generating social choice functions that satisfy~$\gamma$.
We will view the weak majority relation $\baseR$ as a directed graph and refer to a strict edge $(x,y)$ when $x\base y$. 
As specified in \Cref{Def:UpRat}, the recurring theme is to ``locally'' remove strict edges until all majority cycles are broken.

\subsection{Transitive Local Rationalizability}
\label{sec:tlocal}

In order to circumvent Arrow's impossibility, \citet{Bord76a} has studied a setting where pairwise choices are given and the expansion part of transitive rationalizability ($\beta^+$) extends these choices to feasible sets of cardinality greater than two. \citeauthor{Bord76a} postulates that ``the essence of the rationality concept is that choices on large sets must depend `in a positive way' on choices on two-element sets.''
Eventually, \citeauthor{Bord76a} arrives at a characterization of a social choice function known as the \emph{top cycle}. The top cycle has been reinvented several times and is known under various names such as Good set \citep{Good71a}, Smith set \citep{Smit73a}, weak closure maximality \citep{Sen77a}, and GETCHA \citep{Schw86a}.
We simplify and rewrite Bordes' characterization using transitive local rationalizability. 

Let the base relation $\overline{R}$ be given and define the locally rationalizing relation for a feasible set $A$ as follows.
\[
x\mathrel{R^A} y \quad\text{iff}\quad \text{there are $x_1, \dots , x_k\in A$ s.t. $x= x_1 \baseR \dots \baseR x_k=y$.\tag{Transitive closure}}
\]
This family of relations locally rationalizes the \emph{top cycle}.
Since $R^A$ is defined as the transitive closure of $\overline{R}$, the top cycle naturally satisfies $\beta^+$ and is based on $\baseR$. Moreover, it is the finest such function.

\begin{proposition}
	 Let $\overline{R}$ be a complete relation on $U$. Among all choice functions based on $\baseR$, the top cycle is the finest choice function satisfying transitive local rationalizability.
\end{proposition}
\begin{proof}
	Let $R^A$ be the transitive closure of $\overline{R}$ in $A$ as defined above. By definition, each of these relations is transitive.
	Now let $(\widehat{R}^A)_A$ transitively locally rationalize some choice function $C$ based on $\baseR$. Let $x,y\in A$ such that $x \mathrel{R^A} y$. Then, there is a path $x = x_1 \baseR \dots \baseR x_k=y$ in $A$. Since $\overline{R} \cap (A\times A)\subseteq \widehat{R}^A$, this directly implies $x=x_1 \mathrel{\widehat{R}^A} x_2 \mathrel{\widehat{R}^A} \dots \mathrel{\widehat{R}^A} x_k=y$.
	By transitivity of $\mathrel{\widehat{R}^A}$, we thus have $x\mathrel{\widehat{R}^A} y$. This shows that ${R^A}\subseteq {\mathrel{\widehat{R}^A}}$ and, in particular, implies that 
	the top cycle is a refinement of~$C$.
\end{proof}

\subsection{PIP-Transitive Local Rationalizability}\label{sec:piplocal}
We propose a new majoritarian rule that is locally rationalizable using PIP-transitive relations and a refinement of the top cycle.
Intuitively, we remove an edge $y\base x$ if it is contained in a cycle that may contain indifferences but two indifferences can only appear consecutively right before $y$, i.e., $y \base x \dots x_{k-2}\baseI x_{k-1} \baseI y$ is allowed. Since the indifference relation is reflexive, we, roughly speaking, only delete edges on cycles in which every second edge is strict.

Let the base relation $\overline R$ be given. We define $R^A$ as follows.
\begin{align*}
x\mathrel{R^A} y \quad\text{iff}\quad &\text{there are $x_1, \dots , x_k\in A$ s.t. $x= x_1 \baseR \dots \baseR x_k=y$,}\\
 &\text{where $x_i \baseI x_{i+1}$ implies $x_{i+1} \base x_{i+2}$ for all $i < k-2$}.
\end{align*}

We will refer to the social choice function defined by these relations as the \emph{alternating cycle}.

\begin{proposition}
	The alternating cycle satisfies $\gamma^+$.
\end{proposition}
\begin{proof}
It is easy to check that $(R^A)_A$ satisfies all requirements for local rationalizability and the alternating cycle is thus well-defined.
To show PIP-transitivity, let $a \PA b\RA c\PA d$. We need to show that $a\PA d$, i.e., that no alternating path exists from $d$ to $a$.
By $b\RA c$, there must be some path $P_{b,c}$ from $b$ to $c$ where $\baseI$ can only occur consecutively in the last step.
Now, let $P_{d,x}$ be a path starting from $d$ and ending at any $x\in A$ where $\baseI$ does not occur consecutively, which we denote by a dotted line. 

\begin{center}
\begin{tikzpicture}[node distance=\nd,vertex/.style={circle,draw,minimum size=2*\ra,inner sep=1pt}
	]
    \node[vertex] (a) {$\mathwordbox{a}{a}$};
    \node[vertex] (b) [right=of a] {$\mathwordbox{b}{b}$};
    \node[vertex] (c) [right=of b] {$\mathwordbox{c}{c}$};
    \node[vertex] (d) [right=of c] {$\mathwordbox{d}{d}$};
    \node[vertex] (y) [below=of b] {$\mathwordbox{y}{y}$};
    \node[vertex] (x) [below=of d] {$\mathwordbox{x}{x}$};
    
    \draw[-Latex] (a) to (b);
    \draw[-Latex] (c) to (d);
    
    \draw[dotted] (b) to (y);
    \draw[dotted] (d) to (x);
    
    \draw[-,decorate,decoration={snake}] (y) to (c);
    
    \draw[-Latex] (y) to (d);
\end{tikzpicture}
\end{center}

It suffices to show that $a\base x$. We will do this by combining the two existing paths.
First, let $y\in A$ be the last alternative before $c$ on the path $P_{b,c}$. Since $c\PA d$ and $y \baseR c$, it must be that $y\base d$.
Thus, we define the path $P_{b,x}$ by starting at $b$ and walking along $P_{b,c}$ until we arrive at $y$, then using $y\base d$, and finally walk along $P_{d,x}$. Note that since we connect the paths using a strict edge, $P_{b,x}$ has no consecutive occurrences of $\baseI$. If it were that $x\baseI a$, we could extend this path in one step to a path from $b$ to $a$ that has no consecutive occurrences of $\baseI$ except for at the last step. This is not possible by $a\PA b$ and hence $a \base x$, which concludes the proof of PIP-transitivity.
\end{proof}

\subsection{Quasi-Transitive Local Rationalizability}\label{sec:qtlocal}

In this section, we discuss quasi-transitively locally rationalizable social choice functions that fail to be PIP-transitively locally rationalizable: the \emph{Gillies uncovered set}, the \emph{Bordes uncovered set}, the \emph{McKelvey uncovered set}, the \emph{deep uncovered set}, and the \emph{uncaptured set}.

\subsubsection{Gillies Uncovered Set}
Let the base relation $\baseR$ be given and consider the following family of locally rationalizing relations. 
\[
\text{$x\mathrel{R}^A y\quad\text{iff}\quad x \baseR y$ or there is some $z \in A$ with $x\baseR z \base y$.}
\tag{Covering relation}
\]
This relation is known as the Gillies covering relation \citep{Gill59a,Dugg11a} and locally rationalizes a choice function known as the \emph{(Gillies) uncovered set}. For each feasible set, the uncovered set contains the maximal elements with respect to the corresponding covering relation and is thus based on $\baseR$. 
Moreover, we see that for some given $x,y$, we have $x\PA y$ (``$x$ \emph{covers} $y$ in $A$''), if and only if $x \base y$ and for all $z\in A$ with $z \base x$ we have that $z \base y$.
Thus, if some $x$ covers some $y$ in some feasible set, then the same holds true for all feasible subsets containing $x$ and $y$. 
By definition, the covering relation is quasi-transitive.

We now give a characterization of the uncovered set using quasi-transitive local rationalizability and one additional axiom.\footnote{Other characterizations of the uncovered set using inclusion-minimality, either for strict base relations or McKelvey's variant of the uncovered set, were given by \citet{Moul86a}, \citet{DuLa99a}, and \citet{PeSu99a}.}
We say that a choice function $C$ is \emph{weakly idempotent} if 
\[C(C(A))= C(A) \text { for all }A\in \mathcal{U} \text{ with }\lvert C(A) \rvert = 2\text.\tag{Weak idempotency}\]
The stronger---unrestricted---version of idempotency ($C(C(A))= C(A)$ for all $A\in \mathcal{U}$) is not satisfied by the uncovered set (it is, for example, satisfied by the much cruder top cycle; see \Cref{sec:tlocal})

\begin{proposition}
\label{pro:uc}
	Let $\overline{R}$ be a complete relation on $U$. Among all choice functions based on $\baseR$, the uncovered set is the finest one satisfying quasi-transitive local rationalizability and 
	weak idempotency.
\end{proposition}
\begin{proof}
	We first show that the uncovered set satisfies quasi-transitive local rationalizability and 
	weak idempotency.
	By definition, the family of covering relations $(R^A)_A$ locally rationalizes the uncovered set. 
	To show that $R^A$ is quasi-transitive, let $A$ be a feasible set 
	and $x,y,z \in A$ with $x\PA y$ and $y\PA z$. It then needs to be shown that $x \PA z$.
	Since $y$ covers $z$ and $x\base y$, we have that $x \base z$. Further, let $w$ be given with $w\base x$. Since $x$ covers $y$, we also have that $w\base y$. Moreover, since $y$ covers $z$, $w \base z$.  Hence, $x \PA z$.
	Now, assume for contradiction that the uncovered set violates weak idempotency. This implies that $\{x,y\}$ is the uncovered set in some $A\in\mathcal{U}$, even though $x \base y$. Since $x$ cannot cover $y$ in $A$, there has to be some  $z\in A$ such that $y \baseR z \base x$.
	Moreover, since $z$ is not in the uncovered set, it has to be covered by some alternative in $A$.
	Quasi-transitivity of the covering relation implies that $x\PA z$ or $y \PA z$.
	However, both cases are at variance with $z \base x \base y$.
	
	In order to show that the uncovered set is the \emph{finest} choice function satisfying the desired axioms, let $C$ be a choice function based on $\baseR$ that satisfies quasi-transitive local rationalizability and weak idempotency. The family of quasi-transitive locally rationalizing relations is given by $(R^A)_A$.
	It then needs to be shown that the uncovered set is a refinement of $C$.
	To this end, let $x$ be in the uncovered set of $A$. We will show that $x\in C(A)$ by repeatedly applying $\gamma$ on a collection of 2-element and 3-element sets in which $x$ is selected and whose union is $A$. First note that for all $y\in A$ with $x \baseR y$, $x\in C(\{x,y\})$. The more difficult case is that of $y\in A$ with $y \base x$. It follows from the assumption that $x$ is uncovered that there is $z \in A$ such that $z\base y$ and $x \baseR z$.
	Now, let $B\coloneqq \{x,y,z\}$. If $C(B)=\{y\}$, then 
	transitivity of $\PB$ implies $y \PB z$, which contradicts $z \base y$. Similarly, $C(B)=\{z\}$ entails $z \PB x$, which contradicts $x \baseR z$. 
	Moreover, it follows from weak idempotency that $C(B)\neq \{y,z\}$. 
	As a consequence, $x\in C(B)$.
	We have thus found a $B_y\in\mathcal{U}$ for every alternative $y\in A$ such that $x,y\in B_y$ and $x\in C(B_y)$. Repeated application of $\gamma$ then shows that $x\in C(A)$. 
\end{proof}

\subsubsection{Further Variants of the Uncovered Set}
Several variants of covering relations and uncovered sets have been proposed in the literature \citep[see, e.g.,][]{Bord83a,BrFi08b,Dugg11a}. We follow Duggan's terminology here.
The \emph{Bordes uncovered set} is defined by replacing the second condition in the definition of the covering relation with `$x\base z \baseR y$'. Similar to the proof of \Cref{pro:uc}, it can be shown that the Bordes uncovered set is the finest choice function satisfying quasi-transitive local rationalizability and a technical condition requiring that for all $A\in \mathcal{U}$ with $|A|=3$,
\[
\text{if $y\in C(A)$ and $x\baseC y$,\quad then $x\in C(A)$.}
\tag{$\ast$}
\]
The \emph{McKelvey uncovered set} is defined by replacing the second condition in the definition of the covering relation with `$x\baseR z \base y$ or $x\base z \baseR y$'.
It follows from a characterization by \citet{PeSu99a} that the McKelvey uncovered set is the finest choice function satisfying quasi-transitive local rationalizability, weak idempotency, and~($\ast$).
The Mc\-Kelvey uncovered set contains both the Gillies and the Bordes uncovered set.
\citet{Dugg11a} introduced a coarsening of the McKelvey uncovered set that he calls the \emph{deep uncovered set}. Here, the second condition in the definition of the covering relation is replaced with `$x\baseR z \baseR y$'. \Cref{pro:uc} can be adapted to characterize the deep uncovered set: it is the finest choice function satisfying quasi-transitive local rationalizability and a strengthening of~($\ast$) which requires that for all $A\in \mathcal{U}$ with $|A|=3$, $y\in C(A)$ and $x\baseRC y$ implies $x\in C(A)$.\footnote{
Other variants of the uncovered set that take into account weights of pairwise comparisons have been studied by \citet{DuLa99a} and \citet{PeDe17a}.
}

\subsubsection{Uncaptured Set}

The following rule proposed by \citet{Dugg06a} aims at 
{removing edges from three-cycles in which the preceding or subsequent edge is strict, and in four-cycles in which both the preceding and subsequent edge are strict.}
Formally, define the family of local rationalizing relations for a feasible set $A$ as follows. 
\begin{align*}
x\mathrel{R}^A y\quad\text{iff} \quad x \baseR y \text{ or there are $z,w \in A$ with } &x\baseR z \base y \text{, or}\\
 &x\base z \baseR y \text{, or}\\
 &x\base z\baseR w\base y.
\end{align*}
Duggan refers to the resulting social choice function as the \emph{uncaptured set}. It satisfies both $\gamma$ and $\epsilon^+$, which can be demonstrated by proving that for all feasible sets $A$, $R^A$ is quasi-transitive.
Let $A$ be a feasible set with $x\PA y \PA z$ for some $x,y,z \in A$. We need to show $x \PA z$, which consists of three parts.
First, we must exclude three-cycles with a strict edge ending in $x$. Let $a\in A$ with $a\base x$. Then, by $x\PA y$, $a\base y$.
Now, $y\PA z$ implies $a \base z$. Second, we exclude three-cycles with a strict edge starting in $z$. Let $a\in A$ with $z\base a$.
Then, $y \base a$ by $y\PA z$. $x \PA y$ now implies $x\base z$. Last, we need to exclude four-cycles in which only the opposing edge can be weak. Let $a,b\in A$ such that $z\base a \baseR b$. Since $y\PA z$, it must be that $y\baseR b$. Thus, $x\baseR b$ by $x\PA y$, concluding the proof.

\subsection{Local Rationalitability}

In this section, we discuss three social choice functions that are locally rationalizable but fail to be quasi-transitively locally rationalizable: the \emph{untrapped set}, \emph{two-stage majoritarian choice}, and \emph{split cycle}.

\subsubsection{Untrapped set}
Let the base relation $\overline{R}$ be given and define the local rationalizing relation for a feasible set $A$ as follows.
\[
x\mathrel{R^A} y \quad\text{iff}\quad \text{$x\baseR y$ or there are $x_1, \dots , x_k\in A$ s.t. $x= x_1 \base \dots \base x_k=y$.}
\]
This is known as the trapping relation \citep{Dugg06a}  and induces the \emph{untrapped set}.
To prove well-definedness, first let feasible sets $B\subseteq A$ with $x,y\in B$ and $x \RB y$ be given. By definition, it must be that $x \RA y$.
For acyclicity, let $A$ be a feasible set with $x_1,\dots, x_k\in A$ and $x_1\PA \dots \PA x_k$. It needs to be shown that $x_1\RA x_k$.
Observe that $x \PA y$ if and only if $x \base y$ and there is no path w.r.t. $\base$ in $A$ from $y$ to $x$. Consequently, it must be that ${\PA} \subseteq {\base}$, which directly implies $x_1 \base \dots\base x_k$. As desired, this leads to $x_1\RA x_k$ by definition of $R^A$.\footnote{
The untrapped set is conceptually an expansion-consistent version of GOTCHA aka the Schwartz set \citep{Schw86a}. The latter chooses the maximal elements with respect to the transitive closure of the strict majority relation. However, the closure does not respect the weak base relation, and the rule hence violates $\gamma$.
}

\subsubsection{Two-stage majoritarian choice}
 
Let $\overline{R}$ be majority rule for a given preference profile and ${\ge}\subseteq {U\times U}$ a fixed linear order over $U$. Now define the local rationalizing relation for feasible set $A$ by letting
\[
x\mathrel{R^A} y \quad\text{iff}\quad x \baseR y \text{ or ($x > y$ and there is $z\in A$ s.t.~$z\base y$ and $z>y$).}
\]
Clearly, for two feasible sets $A$ and $B$ with $B\subseteq A$, $x \RB y$ implies $x \RA y$.
For the strict part of the local preference relation, we have $x \PA y$ if and only if $x \base y$ and ($x>y$ or $x\in \max_{{\base}\cap{>}}(A)$).
Now assume for contradiction that $R^A$ is not acyclic for some $A$. Then we have $x_1\PA \dots \PA x_k \PA x_1=x_{k+1}$. For $B\coloneqq \{x_1, \dots, x_k\}$, we still have $x_1\PB \dots \PB x_k \PB x_1$ by Condition \emph{(iii')}. Now consider two cases. First, assume there is some $i$ with $x_i > x_{i+1}$. By renaming the alternatives we can assume without loss of generality that $i=1$. Then, $x_2\not\in \max_{{\base}\cap{>}}(B)$ and hence $x_2>x_3$. It follows by induction that $x_2 > x_3> \dots> x_k > x_1$, which contradicts the acyclicity of $\ge$. Otherwise, $x_i\in \max_{{\base}\cap{>}}(B)$ for all $i$ with $1\le i \le k$. Strictness of $\ge$ then implies that $x_1<x_2<\dots <x_k< x_1$, which again contradicts the acyclicity of~$\ge$. 
The choice function locally rationalized by $(R^A)_A$ is based on $\overline{R}$. It was recently introduced for antisymmetric $\baseR$ by \citet{HoSp21a} as \emph{two-stage majoritarian choice}. We lift the restriction on the base relation while maintaining local rationalizability. Note that for antisymmetric relations, the choice function is single-valued for all feasible sets.

\subsubsection{Split cycle}

In order to define split cycle, we not only need the base relation $\overline{R}$ but also an order over pairs of alternatives ${\trianglerighteq}\subseteq {(U\times U)^2}$. These orders occur naturally in social choice and are typically induced by majority margins, i.e., the number of agents who prefer the former to the latter minus the number of agents who prefer the latter to the former.

Then we can define a locally rationalizing relation for feasible set $A$ by letting\footnote{
Observe that if $x\base y$, then $(x,y)\triangleright(v,w)$ for all $v,w$ with $v\baseI w$.
Hence, in contrast to the $\gamma$-core, an equivalent definition is to demand that $x= x_1 \base \dots \base x_k=y$.
}
\begin{align*}
x\mathrel{R}^A y\quad\text{iff}\quad & x \baseR y \text{ or there are $x_1, \dots , x_k\in A$ s.t. } x= x_1 \baseR \dots \baseR x_k=y \text{ and }\\
&(x_i,x_{i+1})\trianglerighteq (y,x) \text{ for all }i<k \text.
\end{align*}
By contraposition, this means $x \PA y$ if and only if $x \base y$ and for all weak base relation cycles $(x_1,\dots, x_k)$ in $A$ with $x_1=y$, $x_k=x$, there is some $i<k$ with $(x,y) \triangleright (x_i,x_{i+1})$.
Clearly, if $x\PA y$, then $x \PB y$ for all $B\subseteq A$ with $x,y\in B$. Further, $P^A$ is acyclic. To see this, let $x_1\PA \dots \PA x_k$. Assume for contradiction $x_k \PA x_1$. Then we have a base relation cycle $(x_1,\dots x_k)$ in $A$. Setting $x_{k+1}=x_1$, one of the pairs $(x_i, x_{i+1})$ must be minimal with respect to $\trianglerighteq$, a contradiction to $x_i\PA x_{i+1}$. 
The choice function locally rationalized by $(R^A)_A$, based on $\baseR$, is called \emph{split cycle} and was recently proposed by \citet{HoPa20a}. Split cycle is similar to Tideman's ranked pairs \citep{Tide87a} and Schulze's rule \citep{Schu11a}, but satisfies some additional desirable properties, most notably~$\gamma$. \Cref{thm:Gamma} provides an intuitive explanation of this behavior.

In the following, we give a characterization of split cycle using local rationalizability. It is inspired by a characterization of split cycle as a ``variable-election collective choice rule (VCCR)'' due to \citet{HoPa21a}. 
The theory of local rationalizability allows us to split their axiom ``coherent IIA'' into two parts, namely local rationalizability and ``crucial defeat''. The advantage of this representation is that local rationalizability is the only required variable-agenda axiom while crucial defeat can be formulated for a fixed agenda. By \Cref{thm:Gamma}, the former condition is equivalent to $\gamma$, a well-established axiom in social choice theory.

To prepare the characterization, let $\mathcal{O}$ denote the set of all complete, transitive orders $\trianglerighteq$ on $U\times U$, 
such that $(x,y) \trianglerighteq (v,w)$ if and only if $(w, v) \trianglerighteq (y, x)$ for all $x,y,v,w\in U$.
Then each ${\trianglerighteq} \in \mathcal O$ induces a complete relation on $U$ by setting $x \mathrel{\overline R_\trianglerighteq} y$ if and only if $(x,y)\trianglerighteq (y,x)$. We then say that a choice function $C$ is based on $\trianglerighteq$, if ${\baseRC} \mathrel{=} \overline{R}_\trianglerighteq$. The strict part of $\trianglerighteq$ is denoted by $\triangleright$, the symmetric part by $\doteq$.
As mentioned above, such transitive orders naturally arise via majority margins in the context of social choice. A majority tie between $x$ and $y$ then corresponds to $(x,y)\doteq (y,x)$. If, instead, $x$ would beat $y$ with a majority margin larger than the one with which $v$ beats $w$, then this corresponds to $(x,y)\triangleright (v,w) \triangleright (w,v)\triangleright(y,x)$.

An $\emph{extended choice function}$ is a family of choice functions $(C(\,\cdot,\trianglerighteq))_{\trianglerighteq \in \mathcal{O}}$, such that for each $\trianglerighteq$, $C(\,\cdot,\trianglerighteq)$ is based on $\trianglerighteq$. 
We say that an extended choice function $(C(\,\cdot,\trianglerighteq))_{\trianglerighteq \in \mathcal{O}}$ satisfies local rationalizability if each $C(\,\cdot,\trianglerighteq)$ is locally rationalizable. Split cycle is denoted by $\splitcycle(\,\cdot,\trianglerighteq))_{\trianglerighteq \in \mathcal{O}}$.
The characterization of split cycle requires two axioms on top of local rationalizability. The first one is the familiar notion of \emph{neutrality}: for all permutations $\pi$ on $U$, $C(\pi(A), \pi(\trianglerighteq))= \pi (C(A,\trianglerighteq))$.\footnote{By $\pi (\trianglerighteq)$ we denote the relation $\trianglerighteq'$ such that $(x,y)\trianglerighteq (v,w)$ if and only if $(\pi(x),\pi(y))\trianglerighteq' (\pi(v),\pi(w))$.}

In order to define the second axiom, we need some more terminology.
Let $\trianglerighteq$, ${\trianglerighteq^*} \in \mathcal O$ and $x,y\in U $ be given, such that $x \mathrel{\overline R_\trianglerighteq} y$ and $x \mathrel{\overline R_{\trianglerighteq^*}} y$.
Further, let $\trianglerighteq$ be identical to $\trianglerighteq^*$ in all comparisons where $(x,y)$ and $(y,x)$ are not involved. 
We then say that $\trianglerighteq^*$ is obtained from $\trianglerighteq$ by \emph{decreasing the intensity} of $(x,y)$, if for all $e \in U\times U$, we have that 
\[
e \trianglerighteq (x,y)\text{ implies }e \trianglerighteq^* (x,y) \quad\text{ and }\quad
e \triangleright (x,y)\text{ implies }e \triangleright^* (x,y)\text.
\]
The axiom of \emph{crucial defeat} then requires that for all $A\in\mathcal{U}$ and ${\trianglerighteq} \in \mathcal O$, if $x\notin C(A, \trianglerighteq)$, then there is some $y\in A$ with $(y,x) \triangleright (x,y)$ such that for any $\trianglerighteq^*$ obtained by decreasing the intensity of some $e\neq (y,x)$, $x\notin  C(A,\trianglerighteq^*)$.

\begin{proposition}
	\thlabel{Thm:SCCharacterization}
	Split cycle is the finest extended choice function satisfying local rationalizability, crucial defeat, and neutrality.
\end{proposition}
\begin{proof}
	Let $S^A$ denote the strict part of the split cycle relation on $A\in\mathcal{U}$ w.r.t. $\overline R_\trianglerighteq$. This means that $x \mathrel{S^A} y$ iff in any cycle in $A$ containing $(y,x)$, there is some $e\in A\times A$ such that $(y,x) \triangleright e$ and $e$ is contained in the cycle.
	First, we show that split cycle satisfies the axioms. Local rationalizability follows immediately from the definition.
	For neutrality, let $\pi: U\to U$ be a permutation, ${\trianglerighteq}\in \mathcal O$ and $A$ a feasible set. Let $x'\in A$. Then $x'\in \pi(\splitcycle(A,\trianglerighteq))$, iff there is $x\in A$ with $\pi(x)=x'$ and $x\in \splitcycle(A,\trianglerighteq)$. By definition, it is equivalent that for all $y\in A$ with $y \mathrel{P_\trianglerighteq} x$, there must be some cycle $C\subseteq A$ with respect to $R_\trianglerighteq$ containing $(y,x)$, such that for all $e\neq (y,x)$ contained in the cycle we have $e\trianglerighteq (y, x)$. Let $y'\coloneqq \pi (y)$. Then, this is equivalent to the existence of a cycle $C'$ with respect to $R_{\pi(\trianglerighteq)}$ such that for all $e'\neq (y',x')$ we have $e'\mathrel{\pi(\trianglerighteq)}(y',x')$. This holds with the identities $C' = \pi(C)$ and $e' = \pi(e)$. 
	By definition, this is equivalent to $x'\in \splitcycle(\pi(A), \pi(\trianglerighteq))$.
	As desired, this implies $\splitcycle(\pi(A), \pi(\trianglerighteq))= \pi (\splitcycle(A,\trianglerighteq))$.
	For crucial defeat, let $x\in A$, $x\notin \splitcycle(A, \trianglerighteq)$. 
	Then, there must be $y\in A$ with $y\mathrel{S^A} x$ (w.r.t. $\trianglerighteq$). Let $\trianglerighteq^*$ be obtained from $\trianglerighteq$ by decreasing the intensity of some $e^*\neq (y,x)$. It suffices to show that $y \mathrel{S^A} x$ (w.r.t. $\trianglerighteq^*$). To this end, consider an arbitrary cycle in $A$ containing $(y,x)$ and let $e\in A\times A$ such that $(y,x) \triangleright e$ and $e$ is contained in the cycle. If  $e^*\neq e$, then it must be that $(y,x) \triangleright^* e$, since the relations are identical up to comparisons involving $e^*$. Otherwise, $e^* = e$. Then $(y,x) \triangleright e^*$ implies $(y,x) \triangleright^* e^*$. Since the cycle containing $(y,x)$ in $A$ was arbitrary and by definition of split cycle, $x\notin \splitcycle(A, \trianglerighteq^*)$.
	
	Now, let $C$ be an extended choice function satisfying the axioms. It suffices to show that split cycle is a refinement of $C$.
	Assume for contradiction that there are ${\trianglerighteq}\in \mathcal O$, $A\in\mathcal{U}$, and $x\in A$ such that $x\in \splitcycle(A,\trianglerighteq)$ but $x\notin C(A,\trianglerighteq)$. Let $y\in A$ be given by crucial defeat.
	By assumption, it is impossible that $y\mathrel{S^A} x$. 
	Further, first assume without loss of generality that there is no $z\neq y$ in $A$ with $z\mathrel{\overline{P}_{\trianglerighteq}} x$.
	Then there must be some cycle $(y,x,x_2,\dots, x_k, y)$ in $A$ in which $(y,x)$ is minimal w.r.t. $\trianglerighteq$. Now, we modify $\trianglerighteq$ to arrive at some $\trianglerighteq^*$ as follows. We decrease the intensity of each $e=(x_i,x_{i+1})$ such that $e\doteq^* (y,x)$. For all other $e=(v,w)\in U\times U$ with $e\neq (y,x), (x,y)$, we decrease their intensity until $(v,w)\doteq^* (w,v)$ to finally obtain $\trianglerighteq^*$. Note that this is possible, as after each decrement, by assumption, only $y$ is a feasible candidate for crucial defeat, and we are therefore allowed to keep decreasing the intensity of any $e\neq (y,x), (x,y)$. By crucial defeat, it must be that $x\notin C(A,\trianglerighteq^*)$. 
	By construction, we have $ x\mathrel{\overline{R}_{\trianglerighteq^*} z}$ for all $z \in A\setminus\{y\}$.
	By local rationalizability, $y \PAC x$. Let $B\coloneqq \{y,x,x_2,\dots, x_k\}\subseteq A$.
	Then, $x\notin C(B,\trianglerighteq^*)$. By neutrality, $z\notin C(B,\trianglerighteq^*)$ for all $z\in B$. This is a desired contradiction to the nonemptiness of $C$.
	
	In case that there is $z_1\neq y_1$ with $y_1\mathrel{\overline{P}_{\trianglerighteq}} x$ and $z_1\mathrel{\overline{P}_{\trianglerighteq}} x$, we apply crucial defeat.
	This allows us to decrease the intensity to $(z_1,x)\doteq^1 (x,z_1)$ while still having $x\notin C(A, \trianglerighteq^1)$.
	Once more by crucial defeat, there must be $y_2 \mathrel{\overline{P}_{\trianglerighteq ^1}} x$. If there still is $z_2\neq y_2$, we can again decrease the intensity to $(z_2,x)\doteq^2 (x,z_2)$ while maintaining $x\notin C(A, \trianglerighteq^2)$.
When iterating this argument, finiteness of the feasible set implies that we eventually terminate with only one element left.
	   
\end{proof}

\subsection{Gamma Hulls of Social Choice Functions}
\label{sec:hull_scf}

The~$\gamma$-hull, as defined in \Cref{def:gammahull} and characterized in \Cref{thm:gammahull}, allows us to turn any social choice function that violates~$\gamma$ into its finest coarsening that satisfies~$\gamma$. Perhaps surprisingly, the~$\gamma$-hulls of many common social choice functions are different from their known coarsenings that satisfy~$\gamma$ and thus yield new unexplored social choice functions. This is, for example, the case for the~$\gamma$-hulls of Borda's rule, Copeland's rule, and the essential set, all of which are different from each other and any rule known to us \citep[see, e.g.,][for definitions]{BBH15a,FHN15a}. It follows from the well-known inclusion of Copeland's rule in the top cycle that the~$\gamma$-hull of Copeland's rule is contained in the top cycle (see \Cref{sec:tlocal}). Furthermore, it can be shown that the~$\gamma$-hull of the essential set is sandwiched in between the Gillies uncovered set and the McKelvey uncovered set (see \Cref{sec:qtlocal}).
On the other hand, the~$\gamma$-hull of the omninomination rule (which returns all top-ranked alternatives) is the Pareto rule. This example illustrates two observations. First, every social choice function that satisfies $\alpha$ has a rationalizable~$\gamma$-hull. Secondly, if a social choice function satisfies $\epsilon^+$, then so does its~$\gamma$-hull.

\subsection{Gamma Core of Social Choice Functions}

The $\gamma$-hulls of common social choice functions are rather coarse.
In this section, we propose another method to define choice functions that satisfy $\gamma$.
	
	To this end, let $\baseR$ be given and consider an arbitrary choice function $f$. The idea of $f$ is to identify the \emph{worst} elements in a feasible set and then break each cycle by removing the outgoing edges of alternatives returned by $f$.
	Formally, we define a locally rationalizing relation for feasible set $A$ by letting
	\begin{align*}
	x\mathrel{R^A} y \quad\text{iff}\quad x \baseR y &\text{ or there are $x_1, \dots , x_k\in A$ s.t. $x= x_1 \base \dots \base x_k=y$ and} \\
						&y\in f( \{x_1,\dots,x_k \} )\text{.}
	\end{align*}
	$(R^A)_A$ locally rationalizes 
	the choice function $\dot{C}_f(A)\coloneqq \max_{R^A} A$, which we refer to as the \emph{strict~$\gamma$-core} of~$f$.
	By definition, $\dot{C}_f$ is always a refinement of the untrapped set (see \Cref{sec:tlocal}). As a consequence, $\dot{C}_f$ fails to be quasi-transitive and thus violates $\epsilon^+$.
	This can be circumvented by modifying the definition of the locally rationalizing relations to
	\begin{align*}
	x\mathrel{R^A} y \quad\text{iff}\quad x \baseR y &\text{ or there are $x_1, \dots , x_k\in A$ s.t. $x= x_1 \baseR \dots \baseR x_k=y$ and} \\
						&y\in f( \{x_1,\dots,x_k \} )\text{.}
	\end{align*}
	The resulting choice function $C_f(A)\coloneqq \max_{R^A} A$ is called the \emph{(weak)~$\gamma$-core} of $f$.
	We then have $\dot{C}_f\subseteq C_f$, where the latter is always a refinement of the top cycle. Many existing social choice functions can be written as the $\gamma$-core of natural choice functions~$f$:

	\begin{itemize}
		\item If $f(A)=A$ for all $A\in\mathcal{U}$, then $C_f$ is the top cycle.
		\item	If $\displaystyle f(A)= \arg\min_{x\in A} \lvert \{y\in A\colon x\baseR y \}\vert$, then $C_f$ is the Gillies uncovered set. 
		\item 	If $f(A)=\max_{\geq} A$ for some fixed linear order ${\geq}\subseteq U\times U$, then $C_f$ is two-stage majoritarian choice.
	\end{itemize}

	The $\gamma$-core procedure also enables the definition of new social choice functions satisfying $\gamma$. For example, already in small instances we observed that the $\gamma$-core of the choice function that returns all alternatives with minimal Borda score is different from any other social choice function that we are familiar with.

\section*{Acknowledgments}
This material is based on work supported by the Deutsche Forschungsgemeinschaft under grants {BR~2312/11-2} and {BR~2312/12-1}. Preliminary results from this paper were presented at the 7th International Conference on Algorithmic Decision Theory (November 2021) and the 16th Meeting of the Society of Social Choice and Welfare in Mexico City (June 2022). The authors thank Martin Bullinger and Sean Horan for helpful feedback.

\end{document}